\theoremstyle{definition}
\newcommand*{\uwfq}{UWFQ\@\xspace}
\newcommand*{\ujf}{UJF\@\xspace}
\newcommand{\comalgo}[1]{{\small \textit{\textcolor{blue}{#1}}}}
\newif\ifnotes
\newcommand{\vcutS}{\vspace*{-0.15cm}}
\newcommand{\vcutL}{\vspace*{-0.5cm}}
\title{
Balancing Fairness and Performance in Multi-User Spark Workloads with Dynamic Scheduling 
}
\date{}
\begin{document}

\author{Dāvis Kažemaks}
\email{davis.kazemaks@gmail.com}
\orcid{}
\affiliation{%
  \institution{Delft University of Technology}
  \state{}
  \country{}
}

\author{Laurens Versluis}
\email{laurens.versluis@asml.com}
\orcid{}
\affiliation{%
  \institution{ASML}
  \state{}
  \country{}
}

\author{Burcu Kulahcioglu Ozkan}
\email{b.ozkan@tudelft.nl}
\orcid{}
\affiliation{%
  \institution{Delft University of Technology}
  \state{}
  \country{}
}

\author{Jérémie Decouchant}
\email{j.decouchant@tudelft.nl}
\orcid{}
\affiliation{%
  \institution{Delft University of Technology}
  \state{}
  \country{}
}

\begin{CCSXML}
<ccs2012>
<concept>
<concept_id>10010520.10010521.10010537.10003100</concept_id>
<concept_desc>Computer systems organization~Cloud computing</concept_desc>
<concept_significance>500</concept_significance>
</concept>
<concept>
<concept_id>10011007.10010940.10010941.10010949.10010957.10010688</concept_id>
<concept_desc>Software and its engineering~Scheduling</concept_desc>
<concept_significance>500</concept_significance>
</concept>
</ccs2012>
\end{CCSXML}

\ccsdesc[500]{Computer systems organization~Cloud computing}
\ccsdesc[500]{Software and its engineering~Scheduling}

\begin{abstract}
Apache Spark is a widely adopted framework for large-scale data processing. 
However, in industrial analytics environments, Spark's built-in schedulers, such as FIFO and fair scheduling, struggle to maintain both user-level fairness and low mean response time, particularly in long-running shared applications. Existing solutions typically focus on job-level fairness which unintentionally favors users who submit more jobs. Although Spark offers a built-in fair scheduler, it lacks adaptability to dynamic user workloads and may degrade overall job performance.  
We present the User Weighted Fair Queuing (\uwfq) scheduler, designed to minimize job response times while ensuring equitable resource distribution across users and their respective jobs. \uwfq simulates a virtual fair queuing system and schedules jobs based on their estimated finish times under a bounded fairness model.   
To further address task skew and reduce priority inversions, which are common in Spark workloads, we introduce runtime partitioning, a method that dynamically refines task granularity based on expected runtime. We implement \uwfq within the Spark framework and evaluate its performance using multi-user synthetic workloads and Google cluster traces. We show that \uwfq reduces the average response time of small jobs by up to 74\% compared to existing built-in Spark schedulers and to state-of-the-art fair scheduling algorithms. 
\end{abstract}

\keywords{Multi-user, Scheduling, Fairness, Spark}

\maketitle

\section{Introduction}

The exponential growth of online data~\cite{apoorv_background_data_increase} has driven an increasing demand for scalable, high performance batch processing frameworks. Among these, Apache Spark, or simply Spark, has emerged as a leading platform~\cite{tang_background_spark_usage} due to its in-memory computation model, rich ecosystem of libraries (e.g., SQL engine~\cite{armbrust_background_spark_sql}) and integration with major cloud providers such as Databricks and Amazon Web Services.  
In industrial analytics environments, Spark is commonly deployed as a shared, multi-user system, where users concurrently submit batch jobs for execution. 
These environments rely on fair scheduling mechanisms to manage constrained computing resources and deliver acceptable job response times to all users. 

In most systems, such a fair scheduler is typically implemented through cluster managers such as YARN or Apache Mesos to control the job flow before it reaches Spark's scheduler. Users insert their main execution code within a Spark application, which additionally interfaces with the cluster manager to request resources. Cluster managers assign resources across multiple Spark applications, each launched by a different user. Prior works have explored such approaches, focusing on equalizing resource distribution~\cite{wei_chen_scheduling_spark_preemption} or meeting job deadlines through application-level scheduling~\cite{wang_scheduling_application_level_spark_yarn}. 
Additionally, Spark now offers dynamic resource allocation (DRA), which adjusts the amount of resources based on the application workload. However, DRA does not enforce strict fairness policies nor explicitly aim to minimize execution time, in particular when the system is congested.

However, launching an application for each workload induces a significant performance overhead in environments where many small jobs need to be executed. 
This overhead comes from to the time required to start the driver program, allocate workers, or launch executors~\cite{chen_background_spark_scheduling_delay}. Furthermore, if the expected workloads are highly parallelizable, all applications benefit from having more executors.

Industrial analytics platforms therefore typically rely on a single long-running Spark application that continuously listens for incoming user jobs and executes them for performance. 
While this solution reduces setup delays, it centralizes scheduling responsibilities within Spark's scheduler, which is ill-equipped to handle complex fairness and performance trade-offs. Spark's built-in scheduling options, i.e., first-in-first-out (FIFO), fair, and fairness pools, fall short in achieving both fairness across users and low job response times, particularly in multi-user long-running Spark applications where job arrival patterns and workloads are highly dynamic~\cite{chen_li_scheduling_mix_min_geo_centers,chen_scheduling_fair_cluser_spark}.
Alternative schedulers have been proposed to ensure fairness among jobs~\cite{li_scheduling_clustering_fair_metaheuristic, quang_scheduling_dynamic_fairness, chen_scheduling_fair_cluser_spark, ferreira_scheduling_fairness_nonclairvoyant, ilyushkin_scheduling_impact_of_unknown_runtime,jia_scheduling_metaheuristic_fairness, pastorelli_scheduling_virtual_sizes}. However, by focusing on job fairness, these schedulers allow users who schedule more jobs to be allocated more resources than those with fewer jobs, which is unfit for our multi-user and multi-job batch processing environment.

Task skew and priority inversions accentuate fairness issues in Spark environments. Both problems stem from Spark's default task partitioning strategy, which decomposes a job's input data into multiple partitions, each executed in parallel as an individual task. 
Task skew occurs when one or more partitions take significantly longer to execute than others. This leads to underutilization of resources, as the straggling partition may use a single core while other cores remain idle, effectively reducing the job's parallelism and extending its completion time. 
Priority inversions occur when long-running tasks from lower-priority jobs occupy executors, thereby blocking higher-priority jobs from making progress. Such inversions 
have been observed in Spark deployments~\cite{chen_chen_scheduling_speculative_reservation} and are particularly problematic when poor partitioning results in jobs that cannot be preempted or rescheduled efficiently~\cite{chen_scheduling_fair_cluser_spark}.

We propose User Weighted Fair Queuing (\uwfq), a novel Spark scheduler that ensures bounded user-job fairness  while significantly reducing job response times. 
\uwfq builds upon the virtual time concept introduced by Cluster Fair Queuing (CFQ)~\cite{chen_scheduling_fair_cluser_spark} by incorporating both user and job context awareness. Additionally, \uwfq dynamically adjusts task granularity based on estimated runtimes to minimize task skew and mitigate priority inversions.  

As a summary, we make the following \textbf{contributions}:

    We introduce \uwfq, a new scheduler that ensures bounded user-job fairness while reducing the average response time of jobs. \uwfq operates by simulating a virtual fair scheduler on currently active jobs in the system, and assigning the highest priority to jobs that would complete the earliest in the virtual scheduler.
    
    We propose a dynamic partitioning method that utilizes the estimated runtime of jobs to determine the optimal number of input partitions to use, thereby avoiding priority inversion caused by limited preemption in Spark and preventing job response time extension due to task runtime skew. This dynamic partitioning algorithm provides more flexibility to \uwfq and further improves the system's performance. 
    
    We implement UWFQ with dynamic partitioning in the Spark framework, and evaluate its effects on fairness and job average response times. We consider synthetic micro-user workloads that highlight certain scenarios that most schedulers handle inefficiently, and real Google cluster traces~\cite{google_traces_wta_format_2014}.
    Our results indicate that \uwfq with runtime partitioning reduces the response times of small and medium-sized jobs by up to 74\% in homogeneous workloads compared with a plain user-job fairness implementation. We additionally showcase that UWFQ always does equally good or better than CFQ, and almost always outperforms Spark's built-in fair scheduler.

\section{System Model and Objectives}

This section provides necessary background on Spark and specifies our  objectives. 

\subsection{Apache Spark}

Apache Spark is an analytics engine for large-scale data processing across multiple programming languages. It includes a rich set of modules that support diverse workloads, including SQL querying, real-time stream processing, machine learning and graph analytics. Most of these high-level operations are compiled down to low-level API calls on Resilient Distributed Datasets (RDD), Spark's core abstraction for distributed data processing. 

\subsubsection{Job execution}

\autoref{fig: job exec} illustrates a simplified job execution. Users construct their jobs by applying transformations on RDDs and finalizing results by issuing an action, which internally triggers a Spark job (Step 1). Once a Spark job is submitted, it is picked up by the DAG Scheduler, which breaks down the job into stages and constructs a DAG dependency graph between them. This graph is created by first creating a result stage from the associated action and recursively adding dependent map stages until no dependencies are found. Stage input is then partitioned into tasks (Step 2) and submitted to the Task Scheduler if all of their dependencies are satisfied (Step 3). The Task Scheduler keeps track of all schedulable units using a Root Pool that contains both individual stages and other layers of pools used for establishing priority hierarchies. 

\begin{figure}[t]
    \centering
    \includesvg[width=\columnwidth]{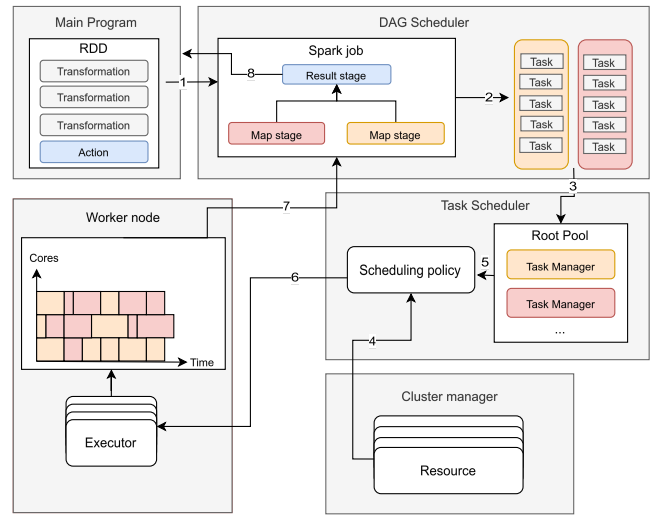}
    \vcutS
    \caption{Execution of user-constructed RDDs on cluster resources.}
    \label{fig: job exec}
    \vcutL
\end{figure}

\begin{figure*}[t]
    \centering
    \includesvg[width=0.9\textwidth]{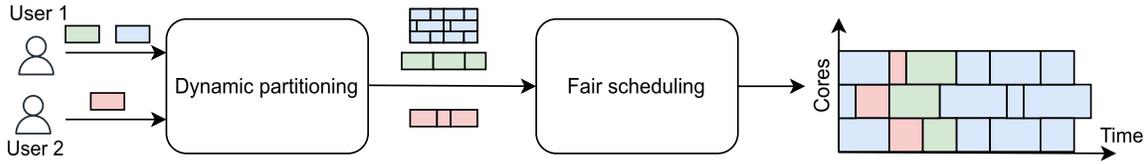}
    \caption{System outline. Here, user 1 schedules a long-running job (blue), followed by a short job (red) from user 2 and another smaller job by user 1 (green). All users and jobs benefit from a low response time thanks to dynamic partitioning and fair scheduling.}
    \label{fig: system outline}
\end{figure*}

Within the Task Scheduler, stages are represented as Task Managers that additionally keep track of their task state. Once the cluster manager offers resources to the Task Scheduler (Step 4), it sorts the Root Pool (Step 5) according to the defined scheduling policy and submits each task one by one to be run on the executors (Step 6). Once all tasks of a Task Manager have been executed, the stage dependency graph is updated (Step 7), and any stage without pending dependencies is submitted to the Task Scheduler. Once the result stage is finished, the result is returned to the Main Program (Step 8). While scheduling decisions are mainly controlled by a scheduling policy, the scheduler always adheres to the dependency structure imposed by other layers. 

\subsubsection{Partitioning}

Stage input data is split into multiple partitions that can be operated on in parallel. For each partition in a stage, a task is created, which performs the stage-defined operations on its associated data partition. This is the main concept that allows Spark to parallelize job execution.
Partitioning of stage data can be performed in two distinct phases: once when the data is first loaded into a stage, and during Adaptive Query Execution (AQE) optimizations, where partitions can be coalesced to reduce their number. When data is first retrieved, the number of partitions is determined by dividing the data equally among the available cores on the allocated executors. This allows Spark to maximize the parallelization of stage execution on executors. AQE, on the other hand, starts with 200 partitions and then reduces them to a more appropriate number based on the recommended partition size, or again, trying to maximize parallelism.

\subsubsection{Built-in Schedulers}

Spark has two built-in scheduling algorithms: first-in first-out (FIFO) and fair. The FIFO policy schedules Spark jobs in the order of their arrival, while the fair scheduler tries to equalize the number of running tasks across all submitted stages. To introduce more fairness levels, Spark also provides fairness pools that allow for defining a fairness hierarchy among jobs. During task scheduling, first, the pool with the highest priority is selected using the root scheduling policy. Then, within this pool, the task with the highest priority is selected according to a pool scheduling policy. The two pool scheduling policies that are currently offered are FIFO and Fair. Once the priority of a task is determined, it is allocated to the resource with the highest task locality. 

While Spark provides several scheduling options, it does not support fair user-level scheduling for long-running applications. Its fair scheduling algorithm only considers job-level fairness, rather than user-level fairness, creating conditions where users with more active stages receive more resources. While fairness pools provide a high degree of configurability to create fairness among users, the user pools are configured only at the start of the application. This means that for a long-running application with a dynamic user base, this option is also insufficient.

\subsection{Objective: User-Job Fairness}

Our target analytics platform accommodates multiple users who may schedule several concurrent jobs. Each user in the system is entitled to an equal share of the system resources and would like their jobs to execute as quickly as possible. Jobs only provide utility once they are fully finished; hence, having a steady progression of jobs is not necessary. However, jobs should not be starved and finish within a reasonable amount of time relative to their runtime. Jobs in the system have no explicit priority, hence, job priority should only be derived from their submission time and runtime.

Since we do not find any existing concrete definition of fairness that would perfectly fit the target system's needs, we extend an existing definition, namely, fair share scheduling, to encapsulate our fairness objective.
We define \textbf{User-Job Fairness (\ujf)} as an extension of fair share scheduling, where weights for jobs would be obtained by distributing resource shares evenly among active users, and then distributing user shares among jobs belonging to that user. We formulate this using 
$R_{k} = \frac{R}{N_u}$ and $R_i = \frac{R_k}{N^k_j}$,
where $R$ is the total amount of resources, $R_{k}$ is user $k$'s fair share, $N_u$ is the number of active users, $R_i$ is the share for job $i$, and $N^k_j$ are active jobs of user $k$. User-job fairness is achieved when resources are distributed exactly proportionally to all jobs' shares.
This definition provides multi-level fairness, where each user is always entitled to an equal amount of resources, and jobs associated with the same user do not steal resources from other users, while still ensuring gradual progression without starvation. 
We also define bounded \ujf, where a scheduling algorithm is considered bounded by \ujf if every job's finish time $f_i$ is within a constant time $C$ of the time it would finish within a UJF schedule $\hat{f_i}$, i.e., $f_i - \hat{f_i} \leq C $.

\section{\uwfq: User Weighted Fair Queuing}

In this section, we present \uwfq, a Spark scheduler that implements response time-efficient user-job fair scheduling, and a dynamic partitioning algorithm that uses job runtimes to improve parallelizability of jobs, as illustrated in ~\autoref{fig: system outline}.
Appendix~\ref{sec:proof} provides a proof of correctness for \uwfq.
When a user schedules a job, its input data is first partitioned into smaller slices of data using a dynamic partitioning algorithm to reduce the skews and long executor reservation times caused by large tasks, which can negatively impact performance. 
Once the job's input is partitioned, it is forwarded to the fair scheduler to determine its priority. \uwfq determines the priority of each job by estimating the finish time of each job in a user-job fair scheduling system and assigning priority to jobs in the order of their completion time. Finally, tasks of each job are scheduled onto the executors in the order of their priority.

\subsection{Job Context Awareness}

In Spark, the highest level of work unit abstraction is a Spark job, which encapsulates all job priority-related metadata. However, this is insufficient for an analytics system, since a single analytics job may represent multiple Spark jobs. Because users only receive utility from analytics jobs that have fully finished, it is more efficient to model job priorities with respect to their highest abstraction level.
We therefore embed this job context into every Spark job that is created, such that it will be scheduled with respect to its corresponding workflow or query. This allows jobs to be scheduled in a much more effective sequence to optimize the response times, rather than having to interleave with every simultaneous job.

\subsection{Dynamic Partitioning}

Spark uses the estimated size of input data to partition stages across multiple executors. This partitioning method ensures that each stage can run on all cores in the system (if available), maximizing parallelism and throughput of the underlying job. However, this partitioning does not account for the runtime of these partitions and can introduce skews and priority inversion in certain cases.

For example, one of the partitions in~\autoref{fig: partitioning skew}~(\subref{subfig: skew bad}) runs 5x longer than the others. Because of this skew, the system does not fully utilize all its available resources, and the response time of the job is delayed. 
Priority inversion can also be observed, e.g., when a long job completes before a higher priority job can receive any resources to execute on, as illustrated in~\autoref{fig: partition prioirty inversion}~(\subref{subfig: inversion bad}). Since Spark tasks are not preemptable, if a longer job arrives just before a higher priority job, it cannot be interrupted and will delay the higher priority job.

\begin{figure}[t]
    \centering
    \begin{subfigure}{0.4\textwidth}
        \centering
        \includesvg[width=\linewidth]{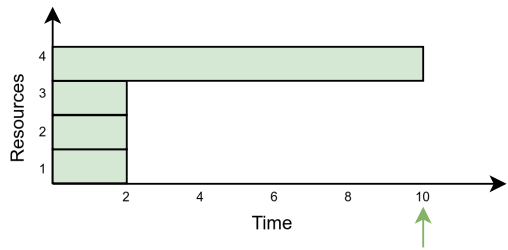}
        \caption{Stage data partitioned using static partitioning}
        \label{subfig: skew bad}
    \end{subfigure}
    \hfill
    \begin{subfigure}{0.4\textwidth}
        \centering
        \includesvg[width=\linewidth]{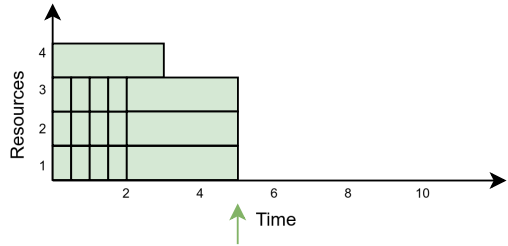}
        \caption{Stage data partitioned using \uwfq's partitioning}
        \label{subfig: skew good}
    \end{subfigure}

    \caption{Impact of task skew on job runtime. The default task partitioning leads to high completion time, in~(\subref{subfig: skew bad}), while \uwfq's dynamic partitioning, in~(\subref{subfig: skew good}), reduces it. The green arrows indicate the finish time of the job. 
    }
    \label{fig: partitioning skew}
\end{figure}

We introduce runtime partitioning to mitigate skews and priority inversions caused by default Spark partitioning of the stage input data. Runtime partitioning attempts to split the stage input into partitions that run in constant time, allowing tasks to be distributed more evenly and reducing the maximum amount of time a task can reserve an executor. The number of partitions is computed as $\frac{Stage\ runtime}{ATR}$ while their size is computed as $\frac{Total\ input\ size}{Partition\ amount}$,
where Advisory Task Runtime ($ATR$)  is the user or system-defined desired task runtime that the partitions are expected to run for. First, we need to collect or estimate accurate $Stage\ runtime$s, which are necessary to perform this partitioning method. Once the stage arrives at the dynamic partitioning component, $Partition\ amount$ can be calculated to determine the number of partitions the data will be split into, which is then converted into $Partition\ size$ that will be practically used to distribute the input data across partitions. After splitting the data, a task will be created for each partition, which will then be eventually scheduled on the executor according to its job's priority.

By adjusting the $ATR$ value, users or the system can control the granularity of all tasks running on the executors. By setting a relatively low advisory task runtime, we can mitigate both skews and priority inversions that were present in the original partitioning method. Skews are avoided by applying more aggressive partitioning, where the stage data can be split into significantly more partitions than there are cores in the system. Because of this, the skewed partitions are additionally split, which allows for spreading the runtime more evenly across executors, as seen in~\autoref{fig: partitioning skew}~(\subref{subfig: skew good}). Priority inversions are also mitigated by this, as tasks release executors much faster, allowing higher-priority tasks to be assigned much quicker than they would under regular partitioning. However, some overhead is introduced by having many active tasks in the system, hence, the $ATR$ value should not be set too low.

\begin{figure}[t]
    \centering
    \begin{subfigure}{0.4\textwidth}
        \centering
        \includesvg[width=\linewidth]{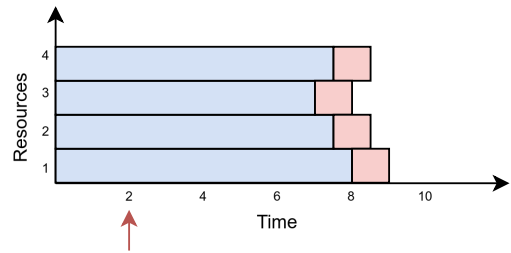}
        \caption{Priority inversion using static partitioning}
        \label{subfig: inversion bad}
    \end{subfigure}
    \hfill
    \begin{subfigure}{0.4\textwidth}
        \centering
        \includesvg[width=\linewidth]{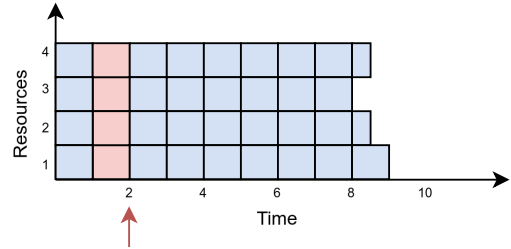}
        \caption{Priority inversion using \uwfq's partitioning}
        \label{subfig: inversion good}
    \end{subfigure}

    \caption{Priority inversion of red (high priority) and blue (low priority) jobs. In~(\subref{subfig: inversion bad}) the red job has a smaller runtime and higher priority, but only executes once the blue job's tasks have finished. But in~(\subref{subfig: inversion good}), thanks to \uwfq's dynamic partitioning, the red job is able to execute earlier. The red arrow indicates the expected finish time of the red job.}
    \label{fig: partition prioirty inversion}
\end{figure}

\subsection{Fair Scheduling}

\uwfq reuses principles originally proposed by Parekh et al.~\cite{parekh_scheduling_original_weighted_fair_queuing} to design WFQ. Specifically, it uses virtual time to efficiently compute job finish times under Generalized processor sharing (GPS) and leverages these virtual finish times to schedule jobs that can complete earlier than they would under GPS, while remaining bounded by it. However, these principles must be adapted to work for a user-job fair environment.
We use virtual time in 2 layers: user virtual time and global virtual time. User virtual time is assigned to each user and functions similarly to regular virtual time, serving as the basis for obtaining job completion times under GPS for that specific user. Global virtual time is shared across all users and is used to determine global virtual deadlines of jobs, which are used to establish priority among all active jobs in the system.

For example, in a scenario where users compete for the same static resources, each having highly parallelizable workloads. By equally allocating resources to each user, the runtimes of jobs are extended due to not achieving optimal parallelism. However, if we were to run these jobs sequentially, each job would be able to utilize all available resources. By assigning a deadline to each job based on its expected finish time, we obtain a global order that allows quicker jobs to complete faster, while still ensuring that large jobs eventually complete once their deadline approaches. To calculate these deadlines, each user first orders its jobs using user virtual time, and then assigns a deadline to each job with respect to global virtual time, such that jobs of multiple users can be compared.

\subsubsection{Algorithm}

We show the assignment of job deadlines in~\autoref{algo: uwfq insert}. To create a schedule, we order the jobs based on their assigned deadlines. 
The algorithm centers around the user entity $U_k$ that keeps track of its associated jobs and virtual time $V^k_{user}$. User virtual time $V^k_{user}$ is used to calculate user deadlines $D_{user}$ for incoming jobs $J_i$. Job order depends on both the user virtual time $V^k_{user}$ upon arrival and the job's slot-time $L_i$. We define job slot-time $L_i$ as the time needed to execute all of job's tasks on a single core sequentially. To allow for dynamic priorities, we also add a scalar $U_w$ for each user. This is set to 1 in the case where users have equal priorities, but can be adjusted to favor certain users in the system, 
Once the job is ordered in the user job set $S^k_{jobs}$, its global deadline $D^i_{global}$ can be derived. This is done by utilizing job's slot-time $L_i$ and user virtual arrival time $V^k_{arrival}$, which is measured from the current virtual global time $V_{global}$ once the user schedules their first job. The deadline is then calculated by sequentially adding each job's scaled $L_i$ to $V^k_{arrival}$ to obtain the expected virtual finish times of jobs across all users.

{
\renewcommand{\baselinestretch}{1.3}
\begin{algorithm}[t]
\caption{Job deadline assignment under UWFQ}
\label{algo: uwfq insert}
\begin{algorithmic}[1]

{\small

\Statex \textbf{Globals:} A set $S_{users}$ of tuples of users $U$ and their virtual arrival times $V_{arrival}$ ;\space Global virtual time $V_{global}$

\Statex \textbf{Input:} Current time $T_{current}$;\space User $k$ $U_k$;\space  Job $i$ arrival time $T^i_{arrival}$;\space  Job $i$ $J_i$ duration $L_i$;  

\Statex \comalgo{// Phase 1: update system}
\State $\Call{updateVirtualTime}{T_{current}}$ \Comment{See \autoref{algo: update virtual time}}

\If{$(U_k,\space \_) \notin S_{users}$ }
\State $V^k_{arrival} \gets V_{global}$
\State $S_{users} \gets S_{users}\cup (U_k, V^k_{arrival})$ 
\EndIf

\Statex \comalgo{// Phase 2: calculate user deadline and insert job $J_j$ into set of user jobs}
\State $V^k_{user} \gets \Call{getUserVirtualTime}{U_k}$ 
\State $D^i_{user} \gets V^k_{user} + L_i * U_w$ 
\State $S^{k}_{jobs} \gets \Call{getUserJobs}{U_k} $ \Comment{$S_{jobs}$ is a sorted set of tuples on virtual user deadlines $D_{user}$}
\State $S^{k}_{jobs} \gets S^{k}_{jobs} \cup (J_i,\space L_i,\space  D^i_{user})$

\Statex \comalgo{// Phase 3: update user job global virtual deadlines}
\State $(J_{first},\space L_{first}) \gets \Call{getEarliestUserDeadline}{S^{k}_{jobs}}$
\State $V^k_{arrival} \gets \Call{getUserVirtualArrivalTime}{U_i}$
\State $D^{first}_{global} \gets V^k_{arrival} + L_{first} * U_w$
\State $\Call{setJobDeadline}{J_{first}, D^{first}_{global}}$

\State $D^{previous}_{global} \gets D^{first}_{global}$
\For{each $J_a$ in $S^{k}_{jobs}$ sorted by $D^a_{user}$ and $J_a \not= J_{first}$}
    \State $D^a_{global} \gets D^{previous}_{global} + L_a * U_w$
    \State $\Call{setJobDeadline}{J_{a}, D^{a}_{global}}$
    \State $D^{previous}_{global} \gets D^a_{global}$

\EndFor
}
\end{algorithmic}
\end{algorithm}
}

\subsubsection{Updating virtual time}
\label{sec: updating virtual time}

In 2-level virtual time, we have to update both the global time and virtual time for each user. \autoref{algo: update virtual time} updates the global virtual time and \autoref{algo: update virtual time users} updates a user's virtual time. We split these algorithms in functions that we explain in the following. 

{
\renewcommand{\baselinestretch}{1.3}
\begin{algorithm}[t]
\caption{Virtual time updating}\label{algo: update virtual time}
\begin{algorithmic}[1]
{\small

\Statex \textbf{Globals:} A set $S_{users}$ of tuples of users $U$ and their virtual arrival times $V_{arrival}$ ;\space Global virtual time $V_{global}$;\space Previous update time $T_{previous}$;\space Total resources of the system $R$

\Function{updateVirtualTime}{$T_{current}$}
\For{each $(U_k,\space \_)$ in $S_{users}$ sorted by $D_{global}$}
\State $R_{user} \gets \frac{R}{|S_{users}|}$
\State $T^k_{finish} \gets$ \Call{getUserFinishTime}{$U_k, R_{user}$}
\If{$T^k_{finish} > T_{current}$}
    \State \textbf{break}
\EndIf
\State $S_{users} \gets S_{users} \setminus (U_k,\space \_)$
\State \Call{progressVirtualTime}{$T^k_{finish}, R_{user}$}

\EndFor

\State $R_{user} \gets \frac{R}{|S_{users}|}$
\State \Call{progressVirtualTime}{$T_{current}, R_{user}$}

\EndFunction

\Function{getUserFinishTime}{$U, R_{user}$}
    \State $D^{latest}_{global} \gets \Call{getLatestDeadline}{U}$
    \State $T_{spent} \gets (D^{latest}_{global} - V_{global}) / R_{user}$
    \State $T_{finish} \gets T_{previous} + T_{spent}$
    \State \textbf{return} $T_{finish}$
\EndFunction

\Function{progressVirtualTime}{$T, R_{user}$}
    \State $T_{passed} \gets T - T_{previous}$
    \State $V_{global} \gets V_{global} + T_{passed} * R_{user}$

    \For{each $(U_k,\space \_)$ in $S_{users}$}
        \State \Call{updateUserVirtualTime}{$U_k, R_{user}, T$}
    \EndFor

    \State $T_{previous} \gets T$
\EndFunction
}
\end{algorithmic}
\end{algorithm}
}

{
\renewcommand{\baselinestretch}{1.3}
\begin{algorithm}[t]
    \caption{Virtual time updating for users}
    \label{algo: update virtual time users}
    \begin{algorithmic}[1]
    {\small
    \Statex \textbf{Globals:} \space Previous update time $T_{previous}$
    
    \Function{updateUserVirtualTime}{$U_k, R_{user}, T_{current}$}
        \State $S^{k}_{jobs} \gets \Call{getUserJobs}{U_k}$
        \State $T^{user}_{previous} \gets T_{previous}$
        \State $V^k_{user} \gets \Call{getUserVirtualTime}{U_k}$ 
        \For{each $(J_i,\space L_i,\space  D^i_{user})$ in $S^{k}_{jobs}$ sorted by $D_{user}$}
    
            \State $R_{job} \gets \frac{R_{user}}{|S^{k}_{jobs}|}$
            \State $T_{passed} \gets T_{current} - T^{user}_{previous}$
            \State $V^i_{user} \gets V^k_{user} + (T_{passed} * R_{job})$
            \If{$D^i_{user} > V^i_{user}$}
                \State \textbf{break}
            \EndIf
            \State $V_{spent} \gets D^i_{user} - V^k_{user}$
            \State $T_{spent} \gets \frac{V_{spent}}{R_{job}}$
            \State $V^k_{user} \gets V^k_{user} + V_{spent}$
            \State $T^{user}_{previous} \gets T^{user}_{previous} + T_{spent}$
            \State $V^k_{arrival} \gets \Call{getUserVirtualArrivalTime}{U_k}$
            \State $V^k_{arrival} \gets V^k_{arrival} + L_i$

            \State $S^{k}_{jobs} \gets S^{k}_{jobs} \setminus (J_i,\space L_i,\space  D^i_{user})$
        \EndFor
        \If{$|S^{k}_{jobs}| > 0$}
            \State $R_{job} \gets \frac{R_{user}}{|S^{k}_{jobs}|}$
            \State $T_{spent} \gets T_{current} - T^{user}_{previous}$
            \State $V^k_{user} \gets V^k_{user} + (T_{spent} * R_{job})$
        \EndIf

    \EndFunction

    }

\end{algorithmic}
\end{algorithm}
}

\textbf{updateVirtualTime}.
To update virtual time, we must iterate over all existing users in the order of their completion time. This is necessary to advance the global virtual time correctly, as it progresses at a rate proportional to the number of active users. We first determine the user's share, $R_{user}$, and use it to compute the user's actual finish time, $T^k_{finish}$. If $T^k_{finish}$ is after the current time $T_{current}$, we can conclude that the current and following users have not finished their jobs yet. If $T^k_{finish}$ is before current time $T_{current}$, then each remaining user should update until $T^k_{finish}$ with the current user share $R_{user}$, to have their virtual times up to date before the user leaves the system and distributes their share among active users.
Once we have handled all leaving users, we can progress virtual time till the current time $T_{current}$. Since no user will leave the system during this period, the user share, $R_{user}$, will remain consistent and progress all users equally.

\textbf{getUserFinishTime}.
We obtain the user's finish time $T_{finish}$ as the time its last job finishes. Since all user jobs are ordered based on their user virtual deadlines, we can trivially get the latest finished job by taking the last element from the user job set. To convert from global virtual deadline $D^{latest}_{global}$ to real time, we can calculate the difference between current global virtual time $V_{global}$ and the deadline $D^{latest}_{global}$, and divide it by the user share $R_{user}$. The difference represents the global virtual time that will progress between now and the time the job ends, and dividing by the user share $R_{user}$ allows us to convert from virtual to real time units.
Finally, we obtain the finish time $T_{finish}$ by adding the real time spent $T_{spent}$ to the previous update time $T_{previous}$. Previous update time represents the period when virtual time was last updated, or the previous current time $T_{current}$ that was used to update virtual time.  

\textbf{progressVirtualTime}.
To progress virtual time, both global and user virtual times must be updated to the current time $T$.
We first progress the global virtual time. This can be done by calculating the amount of real time that has passed $T_{passed}$ since the previous update $T_{previous}$, and then using it to calculate the virtual time that has passed by multiplying it with the user share amount $R_{user}$. Virtual time in this context represents the marginal rate of progress each user experiences.
Then, to update the user virtual time, we iterate over all active users and progress them till the current time $T$. We cover the details of this in the following subsection.
Lastly, we update the previous update time $T_{previous}$ to reflect the time $T$ it was updated to.

\subsubsection{updateUserVirtualTime}
Updating user virtual time requires iterating over all currently active user jobs in the order of their virtual deadlines $D^j_{user}$. This is because of the same principle as for global virtual time, to ensure that jobs progress at the correct marginal rate. 
We first need to set the user's previous update time, $T^{user}_{previous}$, to the previous update time, $T_{previous}$. This time will be used to track the periods between virtual time updates, ensuring the correct progression of virtual time forward. 
Then we iterate over all user jobs in the order of their user virtual deadlines $D_{user}$. We first calculate the current job share, $R_{job}$, and the time that has passed since the previous update, $T_{passed}$. These arguments can then be used to calculate the assumed user virtual time $V^i_{user}$ that does not account for jobs leaving the system, hence does not equate to the actual user virtual time $V^k_{user}$. However, we can use this virtual time to test whether the job with the earliest deadline, $D^i_{user}$, would have finished. 
If the assumed user virtual time $V^i_{user}$ is after the earliest deadline $D^i_{user}$, we can determine that no job will finish before the current time, and break the loop. However, if the job has finished, we then calculate the virtual time $V_{spent}$ spent on this job by taking the difference between its deadline $D^i_{user}$ and the current user's virtual time $V^k_{user}$. This virtual time can then be converted into real time, $T_{spent}$, by dividing it by the job shares, $R_{job}$. We update the user virtual time $V^k_{user}$ by adding the virtual time that has been spent $V_{spent}$, and progress the previous update time $T^{user}_{previous}$ by the real time spent $T_{spent}$. Besides the user's virtual time, we also must update the virtual arrival time $V^k_{arrival}$. This is done by progressing the virtual arrival time $V^k_{arrival}$ by the runtime of the job $L_i$. The virtual arrival time must be progressed, so future jobs that are assigned the global deadlines account for jobs that have finished in the past, and keep global order consistent. 
Finally, once all finishing jobs are accounted for, in the case there are still any jobs left in $S^k_{jobs}$, we must progress the user virtual time $V^k_{user}$ forward until it is caught up to the current time $T_{current}$. We can do this by calculating the remaining time $T_{spent}$ and multiplying it by the job shares $R_{job}$ to obtain the virtual time that has passed, and then adding it to the user's virtual time $V^k_{user}$.

\section{Implementation Details}

This section discusses how \uwfq is integrated into Spark and how it accounts for the dynamic runtime environment.

\subsection{Apache Spark Integration}

The integration within Spark requires addressing three parts: scheduler, partitioner, and external system integration. 

\subsubsection{Scheduling}
Spark already provides an extendable interface for adding new schedulers. However, this interface is only accessible within the confinement of the source code. To make the framework more flexible, we extend the Spark source code with class loading for custom schedulers. This allows the Spark driver to load any precompiled Java Spark scheduler that is passed as an argument. 
We have released our modified Spark 3.5.5 version\footnote{\href{https://github.com/kazemaksOG/spark-3.5.5-custom}{https://github.com/kazemaksOG/spark-3.5.5-custom}}.
In current Spark releases, the scheduler only operates at the level of stages, where the stage with the highest priority is determined, and its tasks are scheduled onto the executors whenever resources are available. \uwfq is implemented in this layer, where incoming stages have their priority assigned by \uwfq. However, instead of isolating the priority for stages, each stage is first mapped to its corresponding analytics job from which the priority is inherited. Based on when the job first arrived in the system and its total runtime across all stages, the virtual deadline is determined, which will be assigned to every stage belonging to this analytics job. This ensures that even if a job is spread across multiple stages, it will still be executed within the guaranteed user-job fair time bounds and avoid unnecessary interleaving. 
In the Spark ecosystem, the highest priority is assigned to the stage with the lowest priority value $P_s$. This aligns perfectly with global virtual deadlines, as the job with the lowest deadline value $D$ should be scheduled the earliest. We express the priority assignment as $P_s = D^i_{global}$.

\subsubsection{Partitioning}

Spark performs partitioning in two distinct phases of job execution: during the initial input reading and when coalescing between shuffle stages.
When a Spark job is submitted, it must perform a file scan to load the necessary data into the first stage. During the file scan, partition sizes are calculated only using the input data size, which is then used to divide the given input files between the tasks. To introduce runtime partitioning, we override this function with custom partitioning, which then attempts to partition the stage input using estimated runtime. 
After finishing the leaf stages of the DAG, there may be multiple shuffle stages that funnel the output of previous stages as input for the upcoming stage. By default, all shuffle stage outputs are written into 200 partitions. However, the Adaptive Query Execution (AQE) module coalesces these partitions into more appropriately sized partitions using the intermediate output sizes, without considering the upcoming stage runtime. Because the minimum partition amount is set to 1 by default in this phase, it can create long-running tasks if the following stage runtimes are not considered. To improve AQE coalescing, we replace the default minimum partition amount with our dynamic amount calculated from the estimated stage runtime.
This ensures that the partitions never coalesce down to an amount that would introduce long-running tasks, and also minimally interferes with AQE's own methods of reducing runtime skews.
To add more flexibility to the framework, in both cases, the partitioning algorithm is class-loaded into the framework on startup, similarly to the scheduler.

\subsubsection{External system}

The external layer that uses Spark's engine for batch processing only has to interact with Spark Context to ensure compatibility with \uwfq. This is because all metadata that is embedded into the Spark Context is attached to the submitted job that will propagate the metadata to subsequent layers, allowing \uwfq to make proper scheduling decisions. 
\uwfq only needs 2 essential properties to be provided when scheduling a user job: user context and job context. User context allows \uwfq to map each of the stages to their corresponding user, to enable user-job fairness in the schedule. Job context is used to group together stages that were submitted under the same analytics job, allowing stages to be scheduled with respect to the job they belong to instead of being independent. Since users are only concerned with the final outcome of their analytics job, the intermediate results of individual stages do not matter.
Additionally, \uwfq needs runtime predictions to perform effective scheduling, which can be provided by the external system or a performance estimation module within Spark's engine. We achieve this by adding a class-loaded performance estimator to the framework, which can then be accessed across all Spark components to inquire performance metrics of a stage or other units. The benefit of this component is decoupling the performance prediction from the external system, and providing a much more specialized interface for other components to acquire runtime estimates of work units.

\subsection{Accounting for Dynamic Environments}

\uwfq stops considering users as soon as all of their jobs have finished. This is necessary to accurately distribute resources among users who still have pending jobs running. However, due to delays caused by inaccuracies in runtime estimation, there are cases where users would exit the system before all the stages associated with their job have finished. This creates a scenario where a discrepancy exists between the real system and virtual time, where the real job is still executing even though it has finished in the virtual scheduler. If the user's job still has stages that have not been scheduled, these stages will not be attributed to the correct start time.
To correct this, we introduce a grace period, in which the following stages can “revive” a user who has exited the system with their original arrival time. 

While this provides the solution for these scenarios, it presents some caveats. By setting the grace period too small, it will not be triggered in the highlighted scenarios, rendering it useless. However, setting it too high could allow inaccurately estimated jobs to gain high priority, since they will be assumed to be delayed by the system rather than their real runtime. 
For our environment, we dynamically adjust the grace period to last 2 resource seconds, as the resource may slightly fluctuate over time. The user will be revived if the inequality $V_{global} < V^{k}_{global,\ end} + T_{grace} * R$ 
is satisfied, where $V^{k}_{global,\ end}$ is the global virtual end time of user $k$, $R$ is the total system resource amount, and $T_{grace}$ is the grace period in resource seconds. The amount of 2 resource seconds has been shown to work for our environment, however, it can differ per system and may need to be more dynamic for appropriate behavior.

\section{Evaluation}

We evaluate the performance of \uwfq on micro-benchmarks and macro-benchmarks. We compare \uwfq with the default Spark built-in fair scheduler and with other representative schedulers. Additionally, we quantify the impact of runtime partitioning on response time and fairness.

\subsection{Setup}

Experiments are conducted on a 
the DAS-5 cluster~\cite{das5_paper}. 
We reserve 5 computing nodes equipped with dual 8-core processors and 60 GB of RAM. Communication between nodes is facilitated by Gigabit Ethernet (GbE) and InfiniBand (IB). We use 4 of the 5 reserved computing nodes to spawn executors, where each node would have at most 2 executors, each reserving 4 cores and 4 GB of RAM, with a total of 32 cores and 32 GB across the cluster. The last node was reserved to run the driver program, with 8 cores and 60 GB of memory. 
We run our experiments on our modified Spark 3.5.5 version
built with Scala 2.12. The modifications applied to Spark enable custom scheduler and partitioner support and do not interfere with the performance of built-in schedulers. The driver is compiled with Java 17 and runs with the Spark standalone cluster manager. 
We leave most Spark configuration parameters at default values with a few exceptions. We enable event logging to collect execution traces after the application has finished, and set high job retention thresholds to avoid omitting earlier jobs. For our partitioned dataset, Spark's default settings cause the dataset to be partitioned too extensively, creating task scheduling overhead.
We solve this by increasing the $maxPartitionBytes$ to an empirically measured amount to avoid this. We have released the experiment setup source code\footnote{Experiment source code: \href{https://github.com/kazemaksOG/spark-benchmark-tool}{https://github.com/kazemaksOG/spark-benchmark-tool}}. 
Designing an accurate runtime predictor is orthogonal to our work. We  therefore assume a perfect runtime prediction for our experiments and discuss this assumption in Sec.~\ref{sec: job runtime prediction}.

\subsubsection{Metrics}
With our experiments, we aim to demonstrate that the \uwfq scheduler can improve response time while still ensuring bounded user-job fairness. To measure this, we collect job response times, slowdowns, and deadline violations and slack ratios. 

\noindent $\bullet$ The \textbf{response time} (RT) of an analytics job is the time that elapses since its first stage is submitted until the last stage is completed, i.e., $RT_i = \Call{max}{T^{i}_{s,end}} - \Call{min}{T^{i}_{s,start}}$.

\noindent $\bullet$ The \textbf{Slowdown} (SL) is calculated by dividing the response time of the job running in the schedule with the response time when the same job is run in the idle system, i.e., as $SL_i = \frac{RT^i_{shared}}{RT^i_{idle}}$.
Compared to response times, slowdowns show relative gains and losses in performance instead of absolute units.

\noindent $\bullet$ The \textbf{deadline violations ratio}~(DVR) is computed by first calculating the ratio $r_i$ for each job by taking the end time difference between the target scheduler and UJF and normalizing it by the UJF runtime of the job, as seen in~\autoref{eq: deadline ratio calc}. Then to calculate the DVR value, the average of the of incurred proportional violations is taken, as seen in~\autoref{eq: deadline violation calc}. Since we do not have a "true" UJF running in the system, we implement a practical UJF scheduler in Spark, and use its execution trace as a substitute. This metric essentially gives us an overview of how much the target scheduler slows down the jobs in comparison to a UJF scheduler.

\noindent $\bullet$ The \textbf{deadline slack ratio}~(DSR) calculates the average of proportional slack time gained, as seen in~\autoref{eq: deadline slack calc}. This showcases the speed-up that the target scheduler brings in comparison to UJF. 

\begin{equation}
    \label{eq: deadline ratio calc}
    r_i = \frac{\Call{max}{T^{i}_{s,end,\text{target}}} - \Call{max}{T^{i}_{s,end,\text{UJF}}}}{RT^i_{\text{UJF}}}
\end{equation}

\begin{equation}
    \label{eq: deadline violation calc}
    DVR = \frac{\sum_i \max(0, r_i)}{\sum_i 1_{\{r_i > 1\}}}
\end{equation}

\begin{equation}
    \label{eq: deadline slack calc}
    DSR = \frac{\sum_i \max(0, -r_i)}{\sum_i 1_{\{r_i \leq 1\}}}
\end{equation}

\subsubsection{Baseline Schedulers}

In micro-benchmarks, we compare \uwfq to representative schedulers. We show the general priority formulas for each of these schedulers. Note that in Spark, a stage with the lowest priority value $P_s$ has the highest scheduling priority.

\noindent $\bullet$ \textbf{Fair scheduling.} Spark comes in with a built-in Fair scheduler that assigns the highest priority to stage $s$ with the least amount of running tasks $N^{s}$, using $P_{s} = N^{s}_{active\ task\ amount}$.
While this is the scheduler we aim to replace, it does not implement UJF, hence it does not necessarily give a good indication of fairness. 

\noindent $\bullet$ \textbf{UJF scheduler.} We implement a practical UJF scheduler in Spark, which we use as the baseline for fairness. This is done by dynamically creating pools for each user as they arrive, assigning the highest priority to the user $k$ with the least amount of tasks $N^k$, with $P_{k} = N^{k}_{active\ task\ amount}$.
Internally, pools use Fair scheduling to distribute resources among their active stages. Note that this does not implement perfect fairness, since we are working with real hardware, where perfect parallelism is not achievable.

\noindent $\bullet$ \textbf{Cluster Fair Queuing.} We implement Cluster Fair Queuing (CFQ)~\cite{chen_scheduling_fair_cluser_spark} for our benchmark by updating the public source code published on GitHub~\footnote{CFQ implementation: \href{https://github.com/chenc10/spark-CFQ-INFOCOM17}{https://github.com/chenc10/spark-CFQ-INFOCOM17}}. While CFQ implements GPS bounded fairness, it is still valuable to compare \uwfq, to see if our proposals bring practical benefit to the system. CFQ priority is decided similarly to \uwfq, but only considers fairness across stages by assigning a deadline $D$ based on traditional virtual time~\cite{parekh_scheduling_original_weighted_fair_queuing}, using $P_{s} = D_{s}$, 
omitting the wider context of users and jobs.

\noindent $\bullet$ \textbf{Using runtime partitioning.} All schedulers are initially evaluated using the default partitioning of Spark. However, to highlight the impact of using runtime partitioning, each scheduler is also evaluated while employing our partitioning algorithm. To distinguish them from their original version, we add a -P to the end of the schedulers utilizing runtime partitioning, e.g., \uwfq-P. Note that when calculating DVR and DSR values, we compare finish times to UJF with the same partitioning implementation.

\begin{table*}[t]
\centering
\caption{Comparison of scheduler performance metrics for scenarios 1 and 2.}
\label{table: micro-benchmark scenario 2}
\resizebox{0.9\textwidth}{!}{%
\begin{tabular}{c|c|cccc|cccc|cc|}
\cline{2-10}
& \textbf{Scheduler} & \multicolumn{4}{c|}{\textbf{Response time (s) $\downarrow$ | \textbf{Slowdown} $\downarrow$}} & \multicolumn{4}{c|}{\textbf{Fairness}} \\
\cline{3-10}
& & \textbf{Avg.} & \textbf{Worst 10\%} & \textbf{Freq. | Infreq.} & \textbf{First | Last} 
& \textbf{DVR $\downarrow$} & \textbf{Violation \# $\downarrow$} & \textbf{DSR $\uparrow$} & \textbf{Slack \# $\uparrow$}
\\
\hline

 \parbox[t]{2mm}{\multirow{3}{*}{\rotatebox[origin=c]{90}{Scen. 1}}} & Fair & 44.0 \textbar{} 19.9 & 86.1 \textbar{} 39.0 & 44.5 \textbar{} 40.8 &  - & 3.25 & 141 & 0.27 & 227 \\
& UJF & 43.1 \textbar{} 19.5 & 74.7 \textbar{} 33.9 & 48.8 \textbar{} 5.13 & -  & - & - & - & - \\
& CFQ & 32.5 \textbar{} 14.7 & \textbf{49.8} \textbar{} \textbf{22.5} & \textbf{32.3} \textbar{} 34.2 & - & 3.69 & 91 & \textbf{0.38} & 277 \\
& UWFQ (this work) & \textbf{29.4} \textbar{} \textbf{13.3} & 50.3 \textbar{} 22.8 & 33.1 \textbar{} \textbf{4.50} & -  & \textbf{0.23} & \textbf{58} & 0.37 & \textbf{310} \\ 

\hline

\parbox[t]{2mm}{\multirow{3}{*}{\rotatebox[origin=c]{90}{Scen. 2}}} & Fair & 28.1 \textbar{} 32.9 & 41.6 \textbar{} 48.9 & - & 21.4 \textbar{} 33.7  & 0.78 & 95 & 0.31 & 145 \\
& UJF & 29.1 \textbar{} 34.1 & \textbf{41.3} \textbar{} \textbf{48.5} &  - & 25.4 \textbar{} 33.3  & - & - & - & - \\
& CFQ & 43.2 \textbar{} 50.7 & 49.5 \textbar{} 58.1 & - & 36.7 \textbar{} 47.3  & 0.80 & 211 & \textbf{0.51} & 29 \\
& UWFQ (this work) & \textbf{25.5} \textbar{} \textbf{29.9} & 46.5 \textbar{} 54.6 & - & \textbf{15.8} \textbar{} \textbf{31.3} & \textbf{0.50} & \textbf{94} & 0.43 & \textbf{146} \\ \hline

\end{tabular}
}

\end{table*}

\subsection{Micro-benchmarks}

Our microbenchmarks use For-Hire Vehicle High Volume (FHVHV) Trip Records from the NYC Taxi and Limousine Commission (TLC)\footnote{TLC trip dataset:~\href{https://www.nyc.gov/site/tlc/about/tlc-trip-record-data.page}{https://www.nyc.gov/site/tlc/about/tlc-trip-record-data.page}} datasets of August 2024. 
The data is stored in a Parquet file format, which we further partition on $PULocationID$ to create more row groups, so the file can be split into more partitions by Spark. The total size of the partitioned parquet file is 752 MB, with 19.1M rows. 
 
We simulate analytics jobs by applying a varying number of operations per row of the dataset. A single analytics job consists of 3 phases: loading the dataset, applying the computation, and retrieving results. Each of these phases has its own stages, creating a linear stage dependency tree. We load the same TLC dataset for all jobs, however, we load them separately, so each job has its own dataset loaded without reusing others. The main time spent in each job is applying computations, which can range from sub-second to 10 seconds in wall-clock time. Finally, the results are collected, which takes only a couple of milliseconds.

We define tiny and short jobs, where each type of job always performs the same operations. We use their runtimes in the idle system
for slowdown calculations. In such settings, short, and tiny jobs respectively require 2.25, and 0.90\,s to run. While this may not accurately simulate dynamic runtimes seen in real settings, it is sufficient to highlight the interaction between jobs based on their arrival times and system congestion. In one of the scenarios, we mimic infrequent user behavior by using a Poisson distribution. A Poisson distribution is commonly used in scheduling to simulate user action frequency~\cite{ilyushkin_scheduling_unknwon_runtime, ilyushkin_scheduling_impact_of_unknown_runtime,pastorelli_scheduling_virtual_sizes}, and it gives fine control over user workload without hard-coding arrival times. 

\begin{table*}[h!]
\centering
\caption{Comparison of scheduler performance metrics with the macro-benchmark.}
\label{table: homo macro}
\resizebox{.85\textwidth}{!}{%
\begin{tabular}{|c|c|cccc|cc|cc|}
\hline
\textbf{Scheduler} & \textbf{Runtime $\downarrow$} & \multicolumn{4}{c|}{\textbf{Response time (s) $\downarrow$}} & \multicolumn{4}{c|}{\textbf{Fairness}} \\
\cline{3-10}
& & \textbf{Avg.\%} & \textbf{0--80\%} & \textbf{80--95\%} & \textbf{95--100\%}
& \textbf{DVR  $\downarrow$} & \textbf{Violation \# $\downarrow$} & \textbf{DSR  $\uparrow$} & \textbf{Slack \# $\downarrow$}
\\
\hline

Fair & 563.1 &46.12 &28.29 &100.5 &\textbf{164.4} &1.00 & 83 & 0.42 & 91 \\
UJF & 565.7 &54.12 &27.43 &140.9 &215.5 &- & - & - & - \\
CFQ & 621.0 &37.05 &11.05 &85.67 &298.1 &0.55 & 30 & 0.52 & 144 \\
UWFQ (this work) & 624.2 &41.42 &12.33 &92.36 &343.5 &\textbf{0.44} & 38 & 0.51 & 136 \\ \hline
Fair-P & 563.8 &49.66 &27.94 &112.7 &202.7 &1.30 & 102 & 0.28 & 72 \\
UJF-P & \textbf{562.8} &50.44 &23.02 &140.2 &214.4 &- & - & - & - \\
CFQ-P & 592.5 &\textbf{28.64} &\textbf{5.51} &\textbf{63.78} &284.2 &0.69 & 28 & \textbf{0.61} & 146 \\
UWFQ-P (this work) & 591.8 &31.19 &6.05 &67.44 &314.6 &0.61 & 27 & 0.56 & 147 \\ \hline

\end{tabular}
}

\end{table*}

\subsubsection{Scenarios}

In collaboration with industry specialists, we consider two scenarios that have been observed in production. 

\noindent $\bullet$ \textbf{Scenario 1: infrequent and frequent users}.
We first examine how schedulers handle the case where some users have many concurrent jobs while some other users only infrequently come to the system to run their workload. This is the main scenario that \uwfq is trying to improve by introducing user context in scheduling, where more emphasis is put on resource distribution among users rather than the runtime of jobs. The main issues with this scenario arise when infrequent users are left without any resources, or all jobs are computed simultaneously, causing a significant increase in response times.
We construct this scenario by introducing 2 infrequent users and 2 frequent users in the system. Infrequent users follow a Poisson distribution when scheduling their jobs. A frequent user schedules a burst of short jobs every 30 seconds that fully congests the system, introducing delay for all jobs. 

\noindent $\bullet$ \textbf{Scenario 2: multiple frequent users}.
We explore the scenario where multiple users supply a burst of jobs to the system, and how effectively the system can recover while providing fair service to all users. The main issues of this scenario are favoring some users more than others, and attempting to compute all jobs simultaneously, which can increase response times.
We implement this scenario by simply having 4 users schedule many tiny jobs simultaneously. Each user has a predefined start delay to ensure that the arrival order of users is consistent across different runs, however, job completion order may differ.

\subsubsection{Results}

\autoref{table: micro-benchmark scenario 2} summarizes the micro-benchmark results.
UWFQ achieves the best average response time in both scenarios, lowering the mean response time by up to 32\% compared to UJF. We attribute these improvements to both job-context and user-context that are present in \uwfq. 
When utilizing job-context, jobs are favored to run till completion, instead of interleaving between all active jobs. This is best visualized in scenario 2 where all jobs are completed gradually, rather than in batches, as seen in~\autoref{fig: micro scenario 4 cdfs}. We can also see that for this scenario, CFQ performs the worst out of all schedulers by a significant amount. We attribute this to the fact that CFQ does not utilize job context, hence executes each job one stage at a time, and finishes them all only at the very end.

\begin{figure*}
\centering
\begin{minipage}{.5\textwidth}
    \centering
    \includesvg[width=1\linewidth]{resources/infrequent_ecdf.svg}
    \caption{Empirical CDFs for infrequent users in scenario 1. }
    \label{fig: micro scenario 2 user cdfs}
\end{minipage}%
\begin{minipage}{.5\textwidth}
  \centering
    \includesvg[width=1\linewidth]{resources/scenario2_ecdf.svg}
    \caption{Empirical CDFs in scenario 2.}
    \label{fig: micro scenario 4 cdfs}
\end{minipage}
\end{figure*}

The most substantial improvement can be seen by introducing user context, where infrequent users have a much better response time in \uwfq and UJF, lowering the average response time in \uwfq by 89\% compared to Fair, which is also highlighted in~\autoref{fig: micro scenario 2 user cdfs}. We believe that in scenarios where users differ in the amount of scheduled jobs, user context allows for fair resource distribution across users, and algorithms that perform fairness only with respect to jobs will disproportionally allocate resources to users with more jobs. This is one of the main drawbacks observed in CFQ, which in this scenario increases the response times of infrequent user jobs by more than 7 times compared to \uwfq.

In scenario 1, 
\uwfq achieves the lowest number of deadline violations and the smallest DVR value.
However, in scenario 2, the fairness metrics are not as representative. This is due to direct completion time comparisons between jobs in the target and UJF schedules, and because the arrival times of all jobs are within a very small interval of time, the order of job completion differs between all schedulers arbitrarily, since there is no global priority between these jobs across schedulers. However, we can analyze the response times between different users to see how resources are distributed among them. 

\uwfq achieves the best performance for its first arriving user and its last arriving user. 
We do note that the first arriving user has a much lower average response time than the last arriving user with \uwfq, however, the same pattern is observed in UJF, hence it could happen due to slightly quicker arrival time of the first user, rather than scheduling unfairness.

\subsection{Macro-benchmarks}

To the best of our knowledge, there is no accepted workload to benchmark batch processing schedulers in multi-user and multi-job environment. Popular benchmark suites such as TPC-H\footnote{TPC-H specification: \href{https://www.tpc.org/tpch/}{https://www.tpc.org/tpch/}}
or HiBench\footnote{HiBench specification: \href{https://github.com/Intel-bigdata/HiBench}{https://github.com/Intel-bigdata/HiBench}}
do not contain the user context information that is necessary to represent such an environment. 
We therefore convert existing system traces that contain multi-user and multi-job metadata into compatible Spark job execution traces, as done previously in some works~\cite{wei_chen_scheduling_spark_preemption,chen_chen_scheduling_effective_partitioning,chen_scheduling_fair_cluser_spark}.
To compose our macro-benchmark, we use Google traces (2014)~\cite{google_traces_wta_format_2014} originally taken from Google cluster usage traces~\cite{wilkes_background_google_trace} but standardized into Workflow Trace Archive (WTA) format~\cite{versluis_background_wta}. 
We select a period of 500 seconds within the trace, and scale the tasks within it to reach a certain utilization threshold. 
We slightly refine the traces to create workloads that we target to solve, while keeping realistic user arrival times and behavior. 
We select the jobs that occur between 1'473'800'000\,ms and 1'474'300'000\,ms. We filter out any jobs whose runtime is 10x longer than the initial median and scale the rest to achieve around 100\% or above theoretical resource utilization. Resource utilization is high to ensure that there is always competition between resources, allowing us to analyze how the system can fairly recover from a burst of jobs. 
The final dataset contains 25 users, where the majority only schedule infrequent small jobs, and 5 large users whose jobs represent more than 90\% of the total workload.  

\subsubsection{Results}

\autoref{table: homo macro}
reports job response times. We group the jobs into the first 80th percentile, the next 15th percentile (medium-sized jobs), and the final 5th percentile.
UJF and Fair execute within the same margin of time, while there is a significant slowdown for both CFQ and \uwfq, both increasing the benchmark time by 10\% in comparison to UJF. We attribute this slowdown to long-running tasks, which may increase the runtime due to ineffective parallelization. However, the slowdown goes down to 5\% when introducing runtime partitioning, which then is most likely caused by JVM warmup~\cite{chen_chen_scheduling_effective_partitioning}, where Fair and UJF algorithms tend to better utilize warm executors.

CFQ and \uwfq achieve the best overall response time, which is attributed to the significant speedups seen for small and medium length jobs.
They respectively improve the average response time by 32\% and 24\% compared to UJF. For jobs in the 80th percentile, the response time is decreased by 60\% and 55\% compared to UJF, and for medium jobs it is lowered by 40\% and 34\% respectively. However, there is a major gain for longer job response times in CFQ and \uwfq, with 38\% and 60\% increase compared to UJF.

\uwfq has a slightly lower DVR value than CFQ, but CFQ has slightly higher DSR value. However, for this benchmark we further analyze the average response times of users to determine fairness differences. We visualize this in~\autoref{fig: macro homo deadline boxplot} by calculating the DVR and DSR values between user average response times rather than job runtimes. We observe that \uwfq is able to more tightly contain the response time improvements with respect to users compared to CFQ. However, the improvement is much less significant than what we have seen in micro-benchmarks.

\begin{figure}[t]
    \centering
    \includesvg[width=1\linewidth]{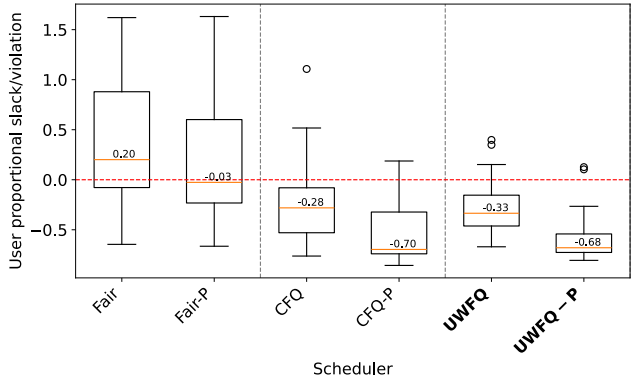}
    \caption{Proportional deadline violations of users, where slacks are negative values and violations are positive. }
    \label{fig: macro homo deadline boxplot}
\end{figure}

Runtime partitioning for a homogeneous workload can significantly improve performance metrics. CFQ-P and \uwfq-P can further reduce average response times by 43\% and 38\% compared to UJF-P, respectively. This massive improvement is caused by lowering the average response times of the 80th percentile by 76\% and 74\% compared to UJF-P, and lowering the next 15th percentile by 55\% and 52\%.
While CFQ and \uwfq still increase the runtime of long jobs, partitioning slightly decreases them to 32\% and 46\% compared to UJF-P. 
Lastly, partitioning additionally makes the algorithms slightly fairer, by lowering the amount of violations and DVR values compared to regular partitioning for both CFQ and \uwfq, and improving fairness across users, as seen in~\autoref{fig: macro homo deadline boxplot}.

\section{Related Work}

\subsection{Scheduling Algorithms}
\label{sec: fair scheduling}

\textbf{Generalized processor sharing (GPS)} is an idealized scheduling model that achieves perfect job-level fairness~\cite{li_scheduling_gps}. GPS 
allocates resources $R_i$ proportionally to a job's weight $w_i$ following $R_i = w_i \times R$.

In practice, GPS cannot be implemented exactly since it requires jobs to be divisible into infinitesimally small quanta, and all resources to run scheduled job quanta simultaneously. However, it is a theoretical baseline that can be used to evaluate other scheduling algorithms.
The major difference between regular GPS and our user-job fairness definition is that job weights are tied to two separate entities. In GPS, once a job enters or leaves the system, the job shares are taken from or distributed among all active jobs on the system. However, under user-job fairness, only the jobs associated with the same user will have to share their resources.

\textbf{Fair-share scheduling} equally distributes resources among users~\cite{kay_scheduling_og_fair}. Jobs are associated with a user $k$ and are serviced proportionally to the user resource share $R_k$ and the job scheduling policy that assigns a weight $w_i$ for the job. In a system with $N_u$ users, the resource $R_i$ each job receives is computed as $R_k = \frac{R}{N_u}$ and $R_i = w_i \times R_k$.

This ensures that each user receives their fair share of resources, guaranteeing progression for their work. While providing a fairness definition for user-level, it leaves job-level scheduling to be defined in the implementation. 

\textbf{Max-min fairness} ensures that each user gains more resources if and only if it does not result in a decrease of resources for another user with a lower or equal share~\cite{tassiulas_other_max_min}. In other words, it maximizes the minimum amount of resources $R_k$ for each user $k$, ensuring fairness, while still distributing resources proportionally by demand. 
This property ensures that each user is guaranteed their fair share of resources, while allowing more demanding users to gain a larger share if permitted. While max-min fairness improves the throughput of the scheduler, it does not account for job priority.

\textbf{Proportional-fair scheduling} can be used to have finer control between fairness and throughput/priority. Proportional-fair scheduling aims to maximize a performance metric of the system while still ensuring at least minimal level of service to all users. 
A common implementation assigns a scheduling interval to the user with the highest priority metric~\cite{sassan_scheduling_proportional_fair}. For the multi-user and multi-job environment, it is defined by $P_k = \frac{P_i}{H_i^\alpha}$,
where $P_k$ is priority of user $k$, $P_i$ is the highest priority job $i$ of the user $k$, and $H_i$ is the historical resource usage time of this user. $\alpha$ is used to tweak the fairness of the algorithm. If set to 0, it will always service the user with the highest job priority. By setting it proportionally high, it gives service to the least serviced user requesting resources, achieving max-min fairness. By setting a reasonable value for $\alpha$, a good compromise between job priority and fairness can be achieved.

\textbf{Weighted fair queuing (WFQ)}~\cite{parekh_scheduling_original_weighted_fair_queuing}
\label{sec: wfq}
executes jobs in the order of their GPS expected finish times and uses job weights to prioritize some jobs.
WFQ first computes the virtual job completion time $V_f$ for every active job in the system as $V_f = V^i_a + \frac{L_i}{w_i}$,
where $w_i$ is the weight of job $i$, $V^i_a$ is the virtual arrival time of job $i$ and $L_i$ is the total execution time of job $i$. Afterward, WFQ schedules the job with the earliest virtual finish time $V_f$.
WFQ approximates GPS, ensuring that every job is completed no later than it would be under GPS, with maximum delay bounded by the size of the largest unsplittable job in the system. This is also expressed by $f_i - \hat{f}_i \leq \frac{L_{max}}{R}$, 
where $f_i$ is the completion time of the job $i$ under WFQ, $\hat{f}_i$ is the completion time of the job $i$ under GPS, $L_{max}$ is the maximum job runtime, and $R$ is the amount of resources that can execute a job in parallel (typically the amount of cores).
WFQ closely approximates the perfect fairness of GPS and regulates the priority of jobs by adjusting their weights. However, its implementation depends on the concept of virtual time.

\textbf{Virtual time}~\cite{chen_scheduling_fair_cluser_spark,parekh_scheduling_original_weighted_fair_queuing,} simulates the marginal service rate that each job receives under GPS. While it is not necessary to enable WFQ, it decreases the time complexity associated with maintaining the finish times of jobs.
Under a GPS scheduling, whenever a new job arrives, each existing job has its finish time extended by the proportion of the resources they have to forfeit to the arriving job. The same happens whenever a job leaves the system, spreading its share across all active jobs. Both events imply an  $O(N)$ complexity to recalculate the finish times of $N$ active jobs.
However, using virtual time, it is possible to warp the time itself based on the amount of resources that are being shared across jobs. Since all jobs equally forfeit or gain resources as jobs enter or leave the system, we can instead advance or slow down virtual time based on the number of active jobs in the system. This is expressed in the time domain by $V(t) = \int_{0}^{t} \frac{R}{N^t_j} \, dt$, 
where $N^t_j$ are the currently active jobs in the GPS schedule at time $t$ and $V(0) =  0$. 
When a job arrives in the system, its virtual deadline is computed as the time at which it would finish under GPS.
While a virtual deadline does not directly map to real time, sorting by virtual deadlines would yield the same order of finish times under GPS scheduling. 
By manipulating virtual time instead of real time, the time complexity incurred by new or ending jobs is reduced from $O(N)$ to $O(\log_2 N)$, since virtual deadlines are set only once, and the incoming jobs have to be put in an ordered list and ending jobs have to be removed from the list, which takes at most $O(\log_2 N)$ time. 

\subsection{Scheduling in Batch Processing Systems}

Pastorelli et al.~\cite{pastorelli_scheduling_virtual_sizes} described a scheduler that is analogous to Weighted Fair Queuing for Hadoop~\cite{parekh_scheduling_original_weighted_fair_queuing}, and that was later adapted to Spark~\cite{chen_scheduling_fair_cluser_spark}. 
These implementations address fairness across jobs rather than among users, as we do in this work.
L. Chen. et al.~\cite{chen_li_scheduling_mix_min_geo_centers} implement a max-min fair scheduling algorithm for Spark jobs across geo-distributed datacenters. 
A fair scheduling solution for application-level scheduling that enables job preemption was developed by W. Chen et al.~\cite{wei_chen_scheduling_spark_preemption}.
Wang et al.~\cite{wang_scheduling_application_level_spark_yarn} improve job completion times by considering deadlines of Spark applications. However, application level scheduling algorithms are not directly applicable to our target industrial system, which is mostly built around a single long-running Spark application.
Some existing implementations attempt to increase the performance of schedulers by optimizing the level of parallelism~\cite{cheng_scheduling_spark_streaming_parallelism,chen_chen_scheduling_effective_partitioning}. In another work, C. Chen et al.~\cite{chen_chen_scheduling_speculative_reservation} use speculative execution and resource reservation to improve scheduling performance. 
These solutions do not enforce a strict fairness policy. However, we believe that these solutions could be integrated with our proposed scheduler to further improve its performance.
While some works mention the performance slowdown caused by long-running tasks~\cite{pastorelli_scheduling_virtual_sizes, chen_scheduling_fair_cluser_spark, wei_chen_scheduling_spark_preemption}, there currently is no effective solution proposed that is able to reduce the impact of long-running tasks while ensuring minimal performance degradation. Our scheduler solves this issue by introducing runtime partitioning.

\subsection{Existing Fairness Definitions}

Most classic scheduling algorithms ensure fairness by distributing resources equally among their users~\cite{kay_scheduling_og_fair, tassiulas_other_max_min, nagle_scheduling_fair_queuing, zhao_scheduling_two_level_fair_lp} or jobs~\cite{li_scheduling_gps, chen_li_scheduling_mix_min_geo_centers, wei_chen_scheduling_spark_preemption}. These fairness definitions compromise other performance metrics such as throughput or response times, hence it can be beneficial to find a compromise between performance and fairness, which is achieved by proportional fairness~\cite{sassan_scheduling_proportional_fair, grandl_scheduling_multi_packing_fairness_knob}. With job deadlines, fairness requires jobs to be serviced before their deadlines~\cite{ilyushkin_scheduling_impact_of_unknown_runtime, chen_scheduling_soft_deadline_latest_start_time_sort, cadorel_scheduling_neardeadline,wang_scheduling_application_level_spark_yarn}.
More recent works define fairness by equalizing some form of detriment caused by sharing resources, e.g., by equalizing the slowdowns each job or user may experience~\cite{jia_scheduling_metaheuristic_fairness, li_scheduling_clustering_fair_metaheuristic}. Bochenina et al.~\cite{bochenina_scheduling_clustering_soft_deadlines} define unfairness as the maximum difference between any two workflow fines, while Rezaeian et al.~\cite{rezaeian_scheduling_fair_multi_workflow} try to equalize the savings each workflow achieves. Ferreira da Silva et al.~\cite{ferreira_scheduling_fairness_nonclairvoyant} define fairness as the difference between the workflow with the most tasks finished and the workflow with the fewest tasks finished.
Fairness is also sometimes defined by some constraint that has to be satisfied. C. Chen et al.~\cite{chen_scheduling_fair_cluser_spark} propose that fairness is a constraint where the difference between the finish time of a task and the finish time of the same task under max-min scheduling never goes beyond a certain constant value, similarly to Pastorelli et al.~\cite{pastorelli_scheduling_virtual_sizes}. In another paper, fairness is used to impose a maximum performance degradation a job can incur by sharing its resources~\cite{chen_chen_scheduling_effective_partitioning}. In the context of backfilling, fairness also refers to the property that lower priority jobs do not delay higher priority jobs~\cite{yuan_scheduling_backfilling_strict_fairness, gomez_scheduling_fattened_backfilling}. 
We find C. Chen et al.~\cite{chen_scheduling_fair_cluser_spark}'s to be the most appropriate fairness definition in batch processing environments. 

\subsection{Job runtime prediction}
\label{sec: job runtime prediction}

Accurate job runtime prediction is crucial for efficient scheduling and partitioning. However, runtime prediction is orthogonal to our focus, and we assumed perfect runtime prediction for our experiments.
Several existing works demonstrate efficient runtime prediction methods that could be integrated with \uwfq. First, by decomposing jobs into smaller units such as operations \cite{wu_performance_prediction_analyitical_sql}, stages \cite{sayeh_performance_prediction_gray_box}, or tasks \cite{gulino_performance_prediction_workflows_apache}, and estimating their runtime based on these individual units. Second, by training a machine learning model using historic data, for example, using regression trees \cite{li_performance_estimation_regression_trees}. Finally, job runtime can be estimated through job simulations \cite{wang_performance_prediction_apache_spark_jobs, wang_performance_prediction_interference, popescu_performance_prediction_predict, venkataraman_performance_prediction_ernest_simulation}. Even with the high accuracy of these methods, it is worth noting that virtual time-based scheduling, which our work builds upon, has been shown to be robust to inaccurate runtime predictions \cite{chen_scheduling_fair_cluser_spark}.

\section{Conclusion}

We presented the User Weighted Fair Queuing (UWFQ) scheduling algorithm, which aims to minimize the average response time of jobs while ensuring that users and their jobs receive a proportional resource share. We additionally introduced a runtime partitioning algorithm to reduce the impact of long-running tasks on fairness and job response times by partitioning the jobs into finer amounts.
We implemented UWFQ with runtime partitioning in Spark to measure its performance and show its viability. Comparing UWFQ to a simple fair scheduler that evenly spreads resources across users, UWFQ with runtime partitioning reduces the average response times of small jobs by up to 74\% in homogeneous workloads, while still ensuring that most jobs are completed within the fairness boundaries. We additionally showed that UWFQ is more robust for handling multi-user, multi-job critical micro-benchmark scenarios, being able to maintain a 6x lower fairness violation ratio compared to other scheduling algorithms.

\printbibliography

\appendix

\section{Correctness of UWFQ}
\label{sec:proof}

In this section, we prove that our UWFQ algorithm correctly implements bounded user-job fairness. We borrow many structural components of the proof from CFQ~\cite{chen_scheduling_fair_cluser_spark} and WFQ~\cite{parekh_scheduling_original_weighted_fair_queuing}, since our algorithm extends on some aspects of these schedulers. Table~\ref{tab:notations} details our notations.

\subsection{Assumptions}

For these proofs, it is necessary to assume an idealized system for user-job fair and 2-level virtual time schedulers, where each job can be divided into infinitesimally small and equal tasks, and all resources can execute them simultaneously and preempt them without scheduling delay. These are equivalent assumptions that are used for hypothetically implementing the GPS scheduler. We define $R$ to represent the available resource amount, which can be infinitely divisible into shares and does not encounter any scheduling delay. 

However, for UWFQ, we assume a limit in job division, which can create a skew in task runtimes, the longest task in the system being $l_{max}$. Additionally, tasks cannot be preempted and have to run till completion once scheduled, creating even more boundaries.

\begin{table}[t]
    \centering
    \caption{Notations}
    \label{tab:notations}
    \begin{tabular}{c| p{6cm}}
    
    \toprule
    \textbf{Notation} & \textbf{Explanation} 
    \\ \midrule
    $_i$ & Job $i$
    \\ \midrule
    $_k$ & User $k$
    \\ \midrule \midrule
    $U$ & User entity
    \\ \midrule
    $N$ & Amount of items
    \\ \midrule
    $S$ & Set of items
    \\ \midrule
    $L$ & Job slot-time (sum of task runtimes)
    \\ \midrule
    $R$ & System resources
    \\ \midrule\midrule
    $f$ & Finish time in 2-level virtual time schedule
    \\ \midrule
    $\hat{f}$ & Finish time in user-job fairness schedule
    \\ \midrule
    $F$ & Finish time in UWFQ schedule
    \\ \midrule\midrule
    $a$ & Arrival time
    \\ \midrule
    $e$ & Full resource access time
    \\ \midrule
    $l$ & Task runtime
    \\
    \bottomrule
    \end{tabular}
    \end{table}

\subsection{Theorems and proofs}

First, we prove that jobs in 2-level virtual (2-LV) time do not finish later than they would in the user-job fair (UJF) scheduler. For this purpose, we first introduce lemmas. 

\begin{lemma}
    \label{lemma: user equivalence}
    Let $U_k$ be an arbitrary user in both 2-level virtual time and user-job fairness. The share $R_k$ this user possesses is equivalent in both schedules at any period of time.
\end{lemma}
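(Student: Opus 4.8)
The plan is to reduce the claim to showing that the active-user set — and hence $N_u$, the number of active users — evolves identically in the two schedulers, since both define the user share the same way: UJF uses $R_k = \frac{R}{N_u}$, while the 2-level virtual time (2-LV) scheduler uses the assignment $R_{user} \gets \frac{R}{|S_{users}|}$ in \autoref{algo: update virtual time}. Once the active-user sets coincide at every instant, the shares coincide immediately. I would argue this by induction over the discrete sequence of events — job arrivals and user departures — noting that between consecutive events $N_u$ is constant, so each share is piecewise constant and it suffices to show that event times agree.

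For the base case and arrivals, observe that a user enters the active set exactly at the real time it submits its first job. This instant is externally determined by the workload and is identical for both schedulers; in the 2-LV case it is Phase 1 of \autoref{algo: uwfq insert}, where $U_k$ is inserted into $S_{users}$. Hence arrival events always agree, independent of the scheduling logic.

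The crux is the departure step, and the key observation I would exploit is that the user level decouples from the intra-user job subdivision: a user is active precisely while its aggregate remaining work is positive, and in both schedulers the user's aggregate progress rate equals its current share $R_k$, because the full user share is consumed by the user's active jobs regardless of the split $R_i = R_k/N^k_j$. Under the inductive hypothesis that shares have agreed up to the current event, the cumulative work completed by user $k$ is identical in both schedules, so its remaining work reaches zero at the same real time. I would then verify that \textsc{getUserFinishTime} in \autoref{algo: update virtual time} computes exactly this real time: since global virtual time advances at rate $R_{user}$ (the update $V_{global} \gets V_{global} + T_{passed} * R_{user}$ in \textsc{progressVirtualTime}), converting the user's latest global deadline back to real time via $T_{spent} = (D^{latest}_{global} - V_{global})/R_{user}$ and adding $T_{previous}$ recovers the UJF completion time, so the 2-LV scheduler removes the user from $S_{users}$ at the same instant UJF deactivates it.

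I expect the main obstacle to be untangling an apparent circularity in this last step: the global deadlines that \textsc{getUserFinishTime} reads back are themselves assigned using the user share, so the argument must be made self-consistent rather than assuming what it proves. I would handle this by strengthening the inductive hypothesis into an invariant — that at each event boundary the assigned global deadlines equal the UJF finish times measured in global-virtual units — and showing that the event handling in \autoref{algo: update virtual time} and \autoref{algo: update virtual time users} preserves this invariant across both arrivals and departures. Closing this induction yields equality of $N_u$ at every time, and therefore equality of $R_k = \frac{R}{N_u}$ for all $t$, as claimed.
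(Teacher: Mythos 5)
Your proposal is correct and follows essentially the same route as the paper's (much terser) proof: both arguments hinge on the observation that shares are split equally at the user level and the full user share is consumed by that user's jobs regardless of the intra-user split, so each user's aggregate work depletes at the same rate in both schedules, users therefore enter and leave the active set at the same instants, and $R_k = R/N_u$ agrees at all times. Your event-by-event induction and the invariant relating global deadlines to UJF finish times are a formalization of what the paper asserts in three sentences via work conservation.
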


\begin{proof}
    Take user $U_k$. Assume there are $N_u$ users that finish before $U_k$ in UJF. Since both 2-LV and UJF distribute shares equally among users and both are work-conserving, users in both schedules progress at the same rate, meaning that all $N_u$ users that finish before $U_k$ in UJF will finish in the same order in 2-LV, and redistribute the same amount of resources to user shares $R_k$ overtime. 
\end{proof}

\begin{lemma}
    \label{lemma: job order}
    Let $U_k$ be an arbitrary user in both 2-level virtual time and user-job fairness. Let $f_i$ be the finish time of job $i$ in 2-level virtual time, and let $\hat{f_i}$ be the finish time of job $i$ in the user-job fair scheduler. For any arbitrary job $i$ in both schedules, user $U_k$ share increase or decrease does not impact the order of completion time $f_i$ and $\hat{f_i}$.
\end{lemma}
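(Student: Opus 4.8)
The plan is to reduce everything to a single time-warp that absorbs the effect of the fluctuating user share, so that the order between $f_i$ and $\hat{f_i}$ is exhibited as depending only on share-independent quantities. First I would invoke Lemma~\ref{lemma: user equivalence}: at every instant the share $R_k$ held by $U_k$ is identical in the 2-LV and UJF schedules. This synchronization is exactly what makes one change of variables simultaneously valid for both schedules.

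Next I would introduce, for user $U_k$, the cumulative user service $W(t) = \int_0^t R_k(\tau)\, d\tau$, i.e. the total amount of resource the user has consumed by real time $t$. Because both schedules are work-conserving and $R_k$ is common to them by Lemma~\ref{lemma: user equivalence}, $W$ is the same monotone non-decreasing function in both, and it is strictly increasing whenever $U_k$ still has pending jobs. I would then re-express each schedule in the warped coordinate $w = W(t)$. The key observation is that both within-user disciplines are rate-oblivious: UJF splits the user's instantaneous service equally among the user's active jobs, while 2-LV sequences the user's jobs by their user-virtual deadlines; in either case the allocation of a unit of warped service depends only on the set of currently active jobs and their remaining slot-times $L$, never on how fast $w$ advances in real time. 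Hence the warped finish times $w_i$ (under 2-LV) and $\hat{w_i}$ (under UJF) are determined purely by the jobs of $U_k$ and the discipline, and are therefore independent of any share increase or decrease caused by other users arriving or leaving.

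Finally I would pull the conclusion back to real time. Since $f_i = W^{-1}(w_i)$ and $\hat{f_i} = W^{-1}(\hat{w_i})$ with $W^{-1}$ a common monotone increasing map, the sign of $f_i - \hat{f_i}$ equals the sign of $w_i - \hat{w_i}$, and the latter does not involve the share at all. Thus a share increase or decrease cannot flip the order of $f_i$ and $\hat{f_i}$. As a local sanity check I would note the equivalent view that at a share-change event every active job of $U_k$ has its remaining real-time-to-completion rescaled by the same factor $R_k'/R_k$ in both schedules, and a uniform rescaling of all completion times preserves their order.

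I expect the main obstacle to be the second step: making precise that both disciplines are genuinely rate-oblivious in the warped coordinate, i.e. that the change of variables cleanly decouples the external (cross-user share) dynamics from the internal (within-user job ordering) dynamics. This rests on the work-conserving and scale-invariant structure of GPS-style fairness, namely that scaling the user's service rate by a common factor scales every active job's progress identically, together with the fact from Lemma~\ref{lemma: user equivalence} that the very same factor applies in both schedules at every instant. Some care is also needed to ensure $W$ is invertible on the relevant range, which follows from its strict monotonicity while the user has work pending, completions occurring only on the strictly increasing portions.
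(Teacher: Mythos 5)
Your proposal is correct, and it rests on the same two pillars as the paper's own proof: \autoref{lemma: user equivalence} to synchronize the user's share across the two schedules, and the observation that a rate common to both schedules cannot affect the order of finish times. The execution differs, though. The paper works with explicit closed-form finish times, namely $f_i = \frac{\sum_{n=1}^{i} L_n}{\hat{R_k}}$ for 2-LV (within-user sequential service by deadline) and the telescoping sum $\hat{f_i} = \sum_{n=1}^{i}\bigl(\frac{N^k_s-n+1}{\hat{R_k}}\cdot(L_n - L_{n-1})\bigr)$ for UJF (within-user equal split), and then cancels the common factor $\frac{1}{\hat{R_k}}$ from both sides of the assumed inequality $f_i \leq \hat{f_i}$, leaving a comparison involving only the $L_n$'s and $N^k_s$. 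Your cumulative-service warp $W(t)=\int_0^t R_k(\tau)\,d\tau$ is in essence a rigorous and more general version of that cancellation: the paper's division by a \emph{volatile} $\hat{R_k}$ is only informally valid when the share genuinely varies over time, whereas your monotone change of variables absorbs an arbitrarily varying share in one stroke and never needs the closed forms, only the rate-obliviousness of the two within-user disciplines and the invertibility of $W$. What the paper's concrete algebra buys in return is the explicit share-free inequality, which it reuses verbatim in the proof of the theorem that follows; with your route, that computation still has to be carried out there. One caveat you share with the paper: both arguments implicitly fix the user's job set while the share fluctuates. Your claim that the warped finish times are share-independent needs adjustment if the same user's jobs arrive mid-stream, since their warped arrival times $W(a)$ do depend on the share history; the paper defers exactly that situation to the case analysis inside the theorem (the arriving job $c$), and your argument would have to do the same.
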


\begin{proof}
    Take user $U_k$. Assume user $U_k$ resources $R_k$ are changing due to users joining and leaving the system. Following~\autoref{lemma: user equivalence}, we know that both schedules will have equivalent shares $R_k$ across this period, hence will contribute equal resources to job execution in both schedules. We can express the finish time $f_i$ as

    \begin{equation}
        \label{eq: fj virtual time finish}
        f_i = \frac{\sum_{n=1}^i L_n}{\hat{R_k}} 
    \end{equation}

    where $n$ till $i$ are all jobs that finish before and including $i$, and $\hat{R_k}$ is the volatile resources assigned to user $k$. Additionally, we can derive the finish time for $\hat{f_i}$

    \begin{equation}
        \label{eq: deriving fj hat}
        \begin{gathered}
            \hat{f_1} = \frac{N^k_s}{\hat{R_k}} \cdot L_1\\
            \hat{f_2} = \hat{f_1} + \frac{N^k_s-1}{\hat{R_k}} \cdot (L_2 - L_1)\\
            \hat{f_3} = \hat{f_2} + \frac{N^k_s-2}{\hat{R_k}} \cdot (L_3 - L_2)\\
            ...\\
            \hat{f_i} = \hat{f_{i-1}} + \frac{N^k_s-i+1}{\hat{R_k}} \cdot (L_i-L_{i-1})\\
            \hat{f_i} = \sum^i_{n=1}(\frac{N^k_s-n+1}{\hat{R_k}}\cdot(L_{n} - L_{n-1})) \text{ , where } L_0 = 0
        \end{gathered}
    \end{equation}

    where $N^k_s$ is the number of jobs the user had at the start of the execution. Let's assume $f_i \leq \hat{f_i}$. By substituting it with~\autoref{eq: fj virtual time finish} and~\autoref{eq: deriving fj hat}

    \begin{equation}
        \label{eq: proving R independant}
        \begin{gathered}
            \frac{\sum_{n=1}^i L_n}{\hat{R_k}}  \leq \ \sum^i_{n=1}(\frac{N^k_s-n+1}{\hat{R_k}}\cdot(L_{n} - L_{n-1}))\\
            \textcolor{red}{\frac{1}{\hat{R_k}}}\cdot\sum_{n=1}^i L_n  \leq \textcolor{red}{\frac{1}{\hat{R_k}}}\cdot\sum^i_{n=1}((N^k_s-n+1)\cdot(L_{n} - L_{n-1}))\\
            \sum_{n=1}^i L_n  \leq \sum^i_{n=1}((N^k_s-n+1)\cdot(L_{n} - L_{n-1}))\\
        \end{gathered}
    \end{equation}

    We get that the order of completion $f_i$ and  $\hat{f_i}$ is independent of user resources.
\end{proof}

\begin{theorem}
    Let $f_i$ be the finish time of job $i$ in 2-level virtual time, and let $\hat{f_i}$ be the finish time of job $i$ in the user-job fair scheduler. For every job $i$ in the system, we show that 
    \begin{equation}
        \label{eq: theorem fj smaller fj hat}
        f_i \leq \hat{f_i}
    \end{equation}
\end{theorem}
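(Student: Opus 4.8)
The plan is to use the two preceding lemmas to strip away the user's (time-varying) resource share entirely, reducing the theorem to a purely algebraic comparison of two closed-form finish-time expressions. By \autoref{lemma: user equivalence} user $U_k$ holds the same share $\hat{R_k}$ in both the 2-LV and UJF schedules at every instant, and by \autoref{lemma: job order} this common factor $1/\hat{R_k}$ cancels from both sides of the target inequality, exactly as carried out in \autoref{eq: proving R independant}. Hence it suffices to establish the resource-free inequality
\[
\sum_{n=1}^i L_n \;\leq\; \sum_{n=1}^i (N^k_s - n + 1)(L_n - L_{n-1}),
\]
with $L_0 = 0$, obtained by substituting the sequential finish time of \autoref{eq: fj virtual time finish} and the fair-share finish time of \autoref{eq: deriving fj hat}.

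I would then prove this inequality by telescoping the right-hand side rather than by induction, since the coefficients $N^k_s - n + 1$ decrease by exactly one at each step, which makes a summation-by-parts collapse the sum cleanly. Writing $c_n = N^k_s - n + 1$ and using $c_n - c_{n+1} = 1$ together with $L_0 = 0$, the right-hand side simplifies to $(N^k_s - i + 1)\,L_i + \sum_{n=1}^{i-1} L_n$. Subtracting the left-hand side $\sum_{n=1}^{i} L_n = L_i + \sum_{n=1}^{i-1} L_n$, every partial-sum term cancels and the difference reduces to the single residual $(N^k_s - i)\,L_i$.

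It then remains only to argue nonnegativity. Since $L_i \geq 0$ for every job, the sign of the residual is governed solely by $N^k_s - i$, and because job $i$ is one of the $N^k_s$ jobs the user holds, its completion index satisfies $i \leq N^k_s$; thus $(N^k_s - i)\,L_i \geq 0$, which gives $f_i \leq \hat{f_i}$ and closes the argument.

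The step I expect to require the most care is the reduction itself rather than the telescoping: I must ensure that the two closed forms of \autoref{eq: fj virtual time finish} and \autoref{eq: deriving fj hat} refer to the same completion order (shortest slot-time first, the order induced by fair sharing), and that the index bound $i \leq N^k_s$ is justified for every job whose finish time we compare, including jobs that arrive after execution begins. This last point is precisely what \autoref{lemma: user equivalence} and \autoref{lemma: job order} are there to absorb, by permitting $\hat{R_k}$ to fluctuate without altering the completion order. Once the ordering and indexing are pinned down, the remaining algebra is routine and the nonnegativity of $(N^k_s - i)\,L_i$ completes the proof.
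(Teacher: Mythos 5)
Your core reduction is sound and is essentially the paper's own argument: after cancelling the common factor $1/\hat{R_k}$ (justified via \autoref{lemma: user equivalence} and \autoref{lemma: job order}, exactly as in \autoref{eq: proving R independant}), the paper collapses $\sum_{n=1}^i (N^k_s-n+1)(L_n-L_{n-1})$ by an iterative term-by-term cancellation ending in $0 \leq (N^k_s-i)\,L_i$ (\autoref{eq: proving fj smaller fj hat}); your summation-by-parts with $c_n = N^k_s-n+1$ is a cleaner packaging of identical algebra, and your nonnegativity argument ($L_i \geq 0$ and $i \leq N^k_s$) matches the paper's conclusion.

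The genuine gap is in what you delegate to the lemmas. You assert that the case of jobs arriving after execution begins is ``precisely what'' \autoref{lemma: user equivalence} and \autoref{lemma: job order} ``are there to absorb.'' It is not. Those lemmas only permit the \emph{user-level} share $\hat{R_k}$ to fluctuate, i.e., they absorb other users joining or leaving the system. They say nothing about a new job $c$ of the \emph{same} user $U_k$ entering $S^k_{jobs}$ mid-execution, which changes how $\hat{R_k}$ is divided among that user's own jobs and therefore alters $\hat{f_i}$; this is not a share fluctuation and is outside both lemmas. The paper handles it with an explicit sub-case: if $c$ finishes after job $i$, then $f_i$ incurs no delay and the bound is immediate; if $c$ finishes before job $i$, the closed forms are re-derived with $L_c$ inserted into the ordered sums (\autoref{eq: proving fj smaller fj hat inserting job l}), which again collapses to $0 \leq (N^k_s-i)\,L_i$. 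The paper also needs a separate case for when job $i$ itself is the arriving job: if its user is new, work conservation gives equal finish times in both schedules; if the user already exists, job $i$ is inserted into the user's ordering and the static-set argument applies. Without these cases --- or without repeating your telescoping with the inserted term --- your proof covers only a fixed job set, so as stated it is incomplete.
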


\begin{proof}
    We consider two cases that cover all jobs in the system. 

    \textit{Case 1:} take arbitrary job $i$ that is already present in the schedule. Since it is present in the schedule, the job is tied to the corresponding user and is ordered within their job set. We show that the job's finish time $f_i$ will not exceed $\hat{f_i}$ under any circumstance.

    First, let's consider that no new job arrives in the system. Following~\autoref{lemma: job order}, we know that user share $\hat{R_k}$ changes over time do not affect the order of finish times. We can then substitute~\autoref{eq: theorem fj smaller fj hat} with~\autoref{eq: fj virtual time finish} and~\autoref{eq: deriving fj hat}

    \begin{equation}
        \small
        \label{eq: proving fj smaller fj hat}
        \begin{gathered}
            \frac{\sum_{n=1}^i L_n}{\hat{R_k}}  \leq \frac{N^k_s}{\hat{R_k}} \cdot L_1 + \frac{N^k_s-1}{\hat{R_k}} \cdot (L_2 - L_1) + ... + \frac{N^k_s-i+1}{\hat{R_k}} \cdot (L_i-L_{i-1})\\
            \textcolor{blue}{\sum_{n=1}^i L_n}  \leq N^k_s \cdot L_1 + (N^k_s-1) \cdot (L_2 - L_1) + ... + (N^k_s-i+1) \cdot (L_i-L_{i-1})\\
            \textcolor{red}{L_1 + L_2 + ... + L_i}  \leq N^k_s \cdot L_1 + (N^k_s-1) \cdot (L_2 - L_1) + ... + (N^k_s-i+1) \cdot (L_i-L_{i-1})\\
            L_2 + ... + L_i  \leq \textcolor{blue}{N^k_s \cdot L_1}  - \textcolor{red}{L_1} + (N^k_s-1) \cdot (L_2 - L_1) + ... + (N^k_s-i+1) \cdot (L_i-L_{i-1})\\
            L_2 + ... + L_i  \leq \textcolor{red}{(N^k_s-1) \cdot L_1} + \textcolor{blue}{(N^k_s-1) \cdot (L_2 - L_1)} + ... + (N^k_s-i+1) \cdot (L_i-L_{i-1})\\
            L_2 + ... + L_i  \leq \textcolor{red}{(N^k_s-1) \cdot L_2} + ... + (N^k_s-i+1) \cdot (L_i-L_{i-1})\\
            ...\\
            L_i  \leq (N^k_s-i+1) \cdot L_i\\
            0  \leq (N^k_s-i) \cdot L_i\\
        \end{gathered}
    \end{equation}

    This inequality holds true for all values $N^k_s \geq i$. Now let's consider the case where a job $c$ arrives for the corresponding user. The job $c$ will either finish before job $i$ or after. If job $c$ finishes after $i$, it is obvious that $f_i \leq \hat{f_i}$, since $f_i$ will not encounter any delay. If job $c$ finishes before $i$:

    \begin{equation}
        \small
        \label{eq: proving fj smaller fj hat inserting job l}
        \begin{gathered}
            L_1 + L_2 + ... + \textcolor{red}{L_c} + ... + L_i \leq \\ N^k_s \cdot L_1 + (N^k_s-1) \cdot (L_2 - L_1) + ... + \textcolor{red}{(N^k_s-c + 1) \cdot (L_c - L_{c-1})} + \\ ... + (N^k_s-i+1) \cdot (L_i-L_{i-1})\\
            ...\\
           L_c + ... + L_i \leq \textcolor{red}{(N^k_s-c + 1) \cdot L_c} + ... + (N^k_s-i+1) \cdot (L_i-L_{i-1})\\
           L_i \leq (N^k_s-i+1) \cdot L_i\\
           0  \leq (N^k_s-i) \cdot L_i\\
        \end{gathered}
    \end{equation}

    \textit{Case 2:} take arbitrary job $i$ that is arriving in the schedule. If the job belongs to a user that does not exist, the job will create a user and assign itself to it. In that case, since both 2-LV and UJF are work-conserving, the job will finish at the same time for both schedules. If the user already exists, the job has to find its order in the existing job set of the corresponding user. Once the job is ordered, \textit{case 1} can be applied to show that $f_i \leq \hat{f_i}$.

\end{proof}

Second, we prove that UWFQ executes jobs in bounded 2-level virtual time.
\begin{theorem}
    Let $f_i$ be the finish time of job $i$ in 2-level virtual time, and let $F_i$ be the finish time of job $i$ in the UWFQ scheduler. For every job $i$ in the system, we show that 
    \begin{equation}
        F_i - f_i \leq \frac{L_{max}}{R} + 2 \cdot l_{max} 
    \end{equation}
\end{theorem}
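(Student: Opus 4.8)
The plan is to mirror the classical Parekh--Gallager analysis of WFQ versus GPS~\cite{parekh_scheduling_original_weighted_fair_queuing}, adapting it to compare the discrete, non-preemptive \uwfq schedule against the idealized 2-level virtual time (2-LV) schedule, and to isolate the two sources of extra delay that match the two terms of the bound. The first thing I would establish is that \uwfq processes jobs in exactly the order of their 2-LV finish times: since the algorithm sets the stage priority to $P_s = D^i_{global}$ and a job's global virtual deadline coincides with its 2-LV completion ordering, any job with a smaller $f_i$ is always preferred. This lets me treat $f_i$ as the analogue of the GPS finish time in the WFQ argument, so that the entire gap $F_i - f_i$ is attributable to (i) the unsplittable-quantum slack of ordering by deadline and (ii) task-level granularity and non-preemption.

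Second, I would derive the $\frac{L_{max}}{R}$ term through a busy-period, work-conservation argument. Fix a job $i$ and let $t_0$ be the last instant before $F_i$ at which either a core is idle while job $i$ still has unfinished work, or a core begins a task belonging to a job with a strictly later deadline than $i$. By maximality of $t_0$, every task that \uwfq starts after $t_0$, apart from the at most $R$ inversion tasks already running at $t_0$, belongs to some job $j$ with $f_j \le f_i$. Since 2-LV is work-conserving and, by the theorem establishing $f_i \le \hat{f_i}$ together with Lemma~\ref{lemma: job order}, completes all such jobs by time $f_i$, the total slot-time \uwfq still has to process after $t_0$ to finish $i$ is at most the work 2-LV drains by $f_i$. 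Comparing the two cumulative-work curves, both advancing at rate $R$ while busy, yields the classical unsplittable-job slack of $\frac{L_{max}}{R}$, exactly as in the single-server bound, with $R$ parallel cores playing the role of the link rate.

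Third, I would charge the two effects that discrete, non-preemptive tasks add on top of the fluid bound, each contributing one $l_{max}$. The first is a \emph{priority-inversion} delay at the critical instant $t_0$: up to $R$ cores may be held by non-preemptable tasks of lower-priority jobs, so job $i$'s highest-priority task waits at most the residual runtime of one such task, bounded by $l_{max}$; because these inversion tasks run in parallel, they extend the makespan by at most $l_{max}$ rather than $R \cdot l_{max}$. The second is a \emph{tail/skew} delay: job $i$'s own final task is indivisible and can extend its completion by up to $l_{max}$ beyond the point at which the fluid model would have drained the job. Summing the inversion and tail contributions gives $2 \cdot l_{max}$, and adding the fluid slack gives $F_i - f_i \le \frac{L_{max}}{R} + 2\,l_{max}$.

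The hardest part will be making the busy-period accounting rigorous when the user shares $R_k$ and the set of active jobs change over the interval $[t_0, F_i]$, since the rate at which 2-LV drains the jobs with $f_j \le f_i$ is not constant. I would handle this by invoking Lemma~\ref{lemma: user equivalence} and Lemma~\ref{lemma: job order} to argue that both the \emph{ordering} and the \emph{total work} of those jobs are invariant to the share dynamics, so that the comparison of cumulative work done by \uwfq and by 2-LV over the busy period stays valid despite the volatile shares. A secondary subtlety is avoiding double-counting between the inversion term and the fluid slack; I would resolve this by charging the at most $R$ inversion tasks present at $t_0$ separately, outside the work counted in the work-conservation bound, so that the three contributions remain disjoint.
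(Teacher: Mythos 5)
Your plan reaches the paper's bound by a genuinely different route. The paper never sets up a busy-period or cumulative-work comparison: it defines $e_i$ as the instant at which job $i$ obtains exclusive use of all $R$ resources until completion, bounds $F_i \leq e_i + l^i_{max}$ (the tail term), and then runs an elementary three-case analysis of what can delay $e_i$ --- nothing (Case~1), same-user higher-priority jobs plus a residual-task skew of $C = l_{max}$ (Case~2-1), or additionally one entire non-preemptable lower-priority job dispatched just before $a_i$, contributing $\frac{L_{max}}{R}$ (Case~2-2). Your unified critical-instant argument in the Parekh--Gallager style, with \autoref{lemma: user equivalence} and \autoref{lemma: job order} invoked to neutralize share volatility, handles all scenarios at once and is structurally cleaner; the paper's case analysis is more elementary and maps each term onto a concrete Spark pathology (task skew, priority inversion), at the cost of informal ``worst-case scenario'' assumptions inside each case.

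There is, however, a concrete soft spot in your accounting of the $\frac{L_{max}}{R}$ term. In the single-server WFQ proof, that slack is not produced by the work-curve comparison; it is exactly the residual service time of the non-preemptable packet in service at $t_0$, i.e., it \emph{is} the inversion term. In your setting the unit of non-preemption is the task, and you have --- correctly --- charged the at most $R$ residual inversion tasks at $t_0$ as a parallel delay of $l_{max}$, explicitly keeping them outside the work-conservation count. Once that charge is made, the work comparison yields no further slack: every higher-priority task started after $t_0$ belongs to a job that arrived after $t_0$ (otherwise it would have been scheduled at $t_0$ instead of the inversion task), 2-LV drains all of that work within $(t_0, f_i]$, so it totals at most $R(f_i - t_0)$, and your argument closes with $F_i \leq f_i + 2\, l_{max}$. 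That is a \emph{tighter} bound, so the theorem still follows, but the $\frac{L_{max}}{R}$ you import by analogy has no source in your construction --- if pressed on where it appears, you could not point to it. The paper's $\frac{L_{max}}{R}$ exists only because its Case~2-2 treats an entire lower-priority job as running to completion once dispatch begins (job-level granularity), a coarser assumption than your per-task scheduling model. Either state explicitly that your argument proves the stronger $2\, l_{max}$ bound, which implies the stated inequality, or adopt the paper's job-level dispatch assumption so that the first term arises for the same reason it does there.
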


\begin{proof}
    Let $a_i$ be the arrival time of job $i$, and job $i$ belonging to user job set $S^k_{jobs}$. In UWFQ, since there may be other jobs running at $a_i$, job $i$ tasks may be scheduled later due to priority or task runtime skews. Let $e_i$ represent the time when job $i$ has full access to resources $R$ until completion of all its tasks. Since no other job can interfere with job $i$ during period $e_i$, its runtime is bounded by the longest running task, i.e.
    \begin{equation}
        \label{eq: uwfq worst case response ej}
        F_i \leq e_i + l^i_{max}
    \end{equation}

    By analyzing the period $e_i$ in all cases, we can determine the worst-case response time $F_i$ for job $i$ in UWFQ. We split this into two major cases:

    \textit{Case 1:} The job $i$ immediately receives all resources $R$, hence $a_i = e_i$. In 2-LV, the finish time of job $i$ would be

    \begin{equation}
        \label{eq: 2-lv finish time idle}
        f_i \leq a_i + \frac{L_i}{R_k}
    \end{equation}

    By subtracting~\autoref{eq: 2-lv finish time idle} from~\autoref{eq: uwfq worst case response ej} 

    \begin{equation}
        \label{eq: bounded aj equal ej case}
        \begin{gathered}
        F_i - f_i \leq e_i + l^i_{max} - a_i - \frac{L_i}{R_k}\\
        F_i - f_i \leq a_i + l^i_{max} - a_i - \frac{L_i}{R_k}\\
        F_i - f_i \leq l^i_{max} - \frac{L_i}{R_k} \leq  l_{max} \\
        \end{gathered}
    \end{equation}

    \textit{Case 2:} The job $i$ is delayed to receive all resources $R$, hence $a_i < e_i$. There are two types of jobs that delay $e_i$: jobs that have higher priority than job $i$, and jobs that have lower priority but started before job $i$. Let job index indicate the order of execution in UWFQ, and let set $S_{high}$ be the set of all jobs that execute before $i$ with higher priority, i.e. $S_{high} = \{\ d\ |\ d < i \text{ and } f_d < f_i\ \}$, and let set $S_{low}$ be the set of jobs that execute before job $i$ with lower priority, i.e., $S_{low} = \{\ g\ |\ g < j \text{ and } f_g > f_i\ \}$. 

    \textit{Case 2-1:} First, let's examine the case where there are only higher priority jobs from set $S_{high}$ running before job $i$. In that case, 2-LV would finish 

    \begin{equation}
        \label{eq: 2-LV higher priority jobs}
        f_i \leq a_i + \frac{L_i}{R_k} + \sum_{n \in S^k_{jobs} \cap S_{low}} \frac{L_n}{R_k}
    \end{equation}

    For UWFQ, the finish time of job $i$ also depends on other user jobs that may run before it. We assume a worst-case scenario, where for every user there is a job that finishes just before $f_i$. This means that all other users will be fully utilizing their share of the resource before job $i$ can be scheduled in UWFQ. Since both UWFQ and 2-LV are work-conserving, user $U_k$ will have the same amount of resources time to execute on in both UWFQ and 2-LV, so the job $i$ will be able to receive all resources at the same time it would in 2-LV, i.e.

    \begin{equation}
        \label{eq: ej for user}
        e_i \leq a_i + \sum_{n \in S^k_{jobs} \cap S_{low}} \frac{L_n}{R_k} + C
    \end{equation}

    where $C$ is a delay caused by skews. To account for skews, we can assume a worst-case scenario, where previously running jobs have accumulated a long-running task on all resource instances, creating a worst-case delay of $C = l_{max}$. We can now substitute~\autoref{eq: uwfq worst case response ej} with~\autoref{eq: ej for user} and substract~\autoref{eq: 2-LV higher priority jobs}

    \begin{equation}
        \label{eq: case 2-1 proof}
        \begin{gathered}
            F_i - f_i \leq \textcolor{red}{a_i} + \textcolor{red}{\sum_{n \in S^k_{jobs} \cap S_{low}} \frac{L_n}{R_k}} + C + l^i_{max} -\\ - (\textcolor{red}{a_i} + \frac{L_i}{R_k} + \textcolor{red}{\sum_{n \in S^k_{jobs} \cap S_{low}} \frac{L_n}{R_k}})\\
            F_i - f_i \leq C + l^i_{max} - \frac{L_i}{R_k}\\
            F_i - f_i \leq l_{max} + l^i_{max} - \frac{L_i}{R_k} \leq 2 \cdot l_{max}
        \end{gathered}
    \end{equation}

    \textit{Case 2-2:} Now let's examine the case where the set of both high priority jobs $S_{high}$ and low priority jobs $S_{low}$ runs before job $i$.

    For 2-LV, lower priority jobs cause no delay. The only case a job from $S_{low}$ can run is in the period where all jobs in $S_{high}$ have finished and job $i$ has not arrived yet. As soon as job $i$ arrives, it will preempt the lower priority job, resulting in the same finish time as in~\autoref{eq: 2-lv finish time idle}.

    For UWFQ, a job from $S_{low}$ can also only run in the period where all jobs in $S_{high}$ have finished and job $i$ has not arrived yet. However, since there is no preemption, in the worst case scenario where job $i$ arrives just before a job from $S_{low}$ is scheduled, it will be delayed by the entire runtime of the low priority job, i.e.

    \begin{equation}
        \label{eq: ej delayed by low prior}
        e_i \leq a_i + \frac{L_{max}}{R} + l_{max}
    \end{equation}

    Substituting~\autoref{eq: uwfq worst case response ej} with~\autoref{eq: ej delayed by low prior} and substracting~\autoref{eq: 2-lv finish time idle} 

    \begin{equation}
        \label{eq: case 2-2 proof}
        \begin{gathered}
            F_i - f_i \leq a_i + \frac{L_{max}}{R} + l_{max} + l^i_{max} - (a_i + \frac{L_i}{R_k}) \\
            F_i - f_i \leq  \frac{L_{max}}{R} + l_{max} + l^i_{max} -  \frac{L_i}{R_k} \leq \frac{L_{max}}{R} + 2 \cdot l_{max} \\
        \end{gathered}
    \end{equation}

\end{proof}

Lastly, we establish a corollary that UWFQ is bounded by user-job fairness.

\begin{corollary}
    Since jobs in 2-level virtual time are bounded by user-job fairness, and UWFQ is bounded by 2-level virtual time, we can express that UWFQ is bounded by user-job fairness.
\end{corollary}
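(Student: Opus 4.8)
The plan is to establish the corollary by \emph{transitivity of the bounds}, chaining the two theorems that precede it. Recall that the paper's definition of bounded \ujf requires exhibiting a constant $C$ such that $F_i - \hat{f_i} \leq C$ for every job $i$, where $\hat{f_i}$ is the finish time under the idealized user-job fair schedule and $F_i$ the finish time under \uwfq. The two theorems already supply exactly the two links I need: the first gives $f_i \leq \hat{f_i}$ (2-level virtual time never finishes a job later than \ujf), and the second gives $F_i - f_i \leq \frac{L_{max}}{R} + 2 \cdot l_{max}$ (\uwfq trails 2-level virtual time by at most a fixed additive slack).

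First I would assume, for the sake of the corollary, both prior results as given. Then I would simply add the 2-LV slack on top of the \ujf upper bound: from the second theorem, $F_i \leq f_i + \frac{L_{max}}{R} + 2 \cdot l_{max}$, and substituting the first theorem's inequality $f_i \leq \hat{f_i}$ into the right-hand side yields
\begin{equation}
    F_i \leq \hat{f_i} + \frac{L_{max}}{R} + 2 \cdot l_{max}.
\end{equation}
Rearranging gives $F_i - \hat{f_i} \leq \frac{L_{max}}{R} + 2 \cdot l_{max}$, which is the defining inequality for bounded \ujf with $C = \frac{L_{max}}{R} + 2 \cdot l_{max}$.

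The one point deserving an explicit word is that $C$ is genuinely a \emph{constant}: it depends only on the maximum job slot-time $L_{max}$, the maximum task runtime $l_{max}$, and the fixed resource amount $R$, all of which are system parameters independent of the particular job $i$ and of the dynamic arrival pattern. I would therefore note that the bound holds uniformly over all jobs, so the same $C$ witnesses boundedness for the entire schedule. I expect no real obstacle here, since the argument is a pure composition of the two inequalities; the only care required is to confirm that the quantities entering $C$ are indeed finite under the assumptions of Section on assumptions (finite $l_{max}$ from the limited job-division assumption, and finite $L_{max}$ for any admissible job), so that the constant is well defined rather than merely formal.
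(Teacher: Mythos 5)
Your proof is correct and takes essentially the same approach as the paper: the corollary is obtained by chaining the two theorems, substituting $f_i \leq \hat{f_i}$ into $F_i - f_i \leq \frac{L_{max}}{R} + 2 \cdot l_{max}$ to obtain $F_i - \hat{f_i} \leq \frac{L_{max}}{R} + 2 \cdot l_{max}$. Your explicit identification of the constant $C$ and the check that it is a well-defined system parameter is a welcome elaboration of what the paper leaves implicit.
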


\end{document}